 \let\MYoriglatexcaption\caption
 \renewcommand{\caption}[2][\relax]{\MYoriglatexcaption[#2]{#2}}
\def\z{\mathbf{z}}
\def\M{\mathbf{M}}
\def\N{\mathbb{N}}
\def\N{\mathbb{N}}
\def\R{\mathbb{R}}
\def\Q{\mathbf{Q}}
\begin{document}
\newcommand{\tr}{\mbox{\tiny T}}
\newtheorem{property}{\noindent \textbf{Property}} {\normalfont }{\normalfont}
\newtheorem{result}{\noindent \textbf{Result}} {\normalfont }{\normalfont}
\newtheorem{alg2}{\noindent \textbf{Algorithm}} {\normalfont }{\normalfont}
\newtheorem{subalg}{\normalfont \textbf{A}}[alg2]{\normalfont}{\normalfont}
\newtheorem{subprop}{\normalfont \textbf{P}}[property]{\normalfont}{\normalfont}
\newenvironment*{proof2}[1]{\textbf{\emph{Proof}} }{}
\newtheorem{theorem}{\noindent \textbf{Theorem}}{\normalfont}{\normalfont}
\newtheorem{assumption}{Assumption}{\normalfont}{\normalfont}
\newtheorem{corollary}[theorem]{Corollary}{\normalfont}{\normalfont}
\newtheorem{lemma}{Lemma}{\normalfont}{\normalfont}
\newtheorem{prop}{\noindent \textbf{Proposition}}{\normalfont}{\normalfont}
\newtheorem{remark}{\noindent \textbf{Remark}}{\normalfont}{\normalfont}
\newtheorem{definition}{Definition}{\normalfont}{\normalfont}

%
\title{A unified framework for solving \\ a general class of conditional and robust \\ set-membership estimation problems$^{\diamond}$}
\author{Vito~Cerone$^{*}$,~\IEEEmembership{Member,~IEEE,}
        Jean-Bernard~Lasserre$^{\diamond}$,
        Dario~Piga$^{\S}$,~\IEEEmembership{Member,~IEEE,}
        Diego~Regruto$^{*}$,~\IEEEmembership{Member,~IEEE}
\thanks{$^{\diamond}$ Accepted for publication in IEEE Transactions on Automatic Control}
\thanks{$^{*}$ V. Cerone and D. Regruto are with Dipartimento di Automatica e Informatica, Politecnico di Torino, 10129 Torino, Italy
        {\tt\small vito.cerone@polito.it, diego.regruto@polito.it}}%
\thanks{$^{\diamond}$J. B. Lasserre is with LAAS-CNRS and Institute of Mathematics, University of Toulouse, 7 Avenue du Colonel Roche
31 077 Toulouse Cedex 4, France
        {\tt\small lasserre@laas.fr}}
\thanks{$^{\S}$D. Piga is with the Control Systems Group, Department of Electrical Engineering, Eindhoven
University of Technology, P.O. Box 513, 5600 MB, Eindhoven, The
Netherlands. {\tt\small D.Piga@tue.nl}}}
%

\maketitle

\vspace{-0cm}
\begin{abstract}

In this paper we present a unified framework for solving a general class of problems arising in the context of set-membership estimation/identification theory. More precisely, the paper aims at providing an original approach for the computation of optimal conditional and robust projection estimates in a nonlinear estimation setting where the operator relating the data and the parameter to be estimated is assumed to be a generic multivariate polynomial function and the uncertainties affecting the data are assumed to belong to semialgebraic sets. By noticing that the computation of both the conditional and the robust projection optimal estimators requires the solution to min-max optimization problems that share the same structure, we propose a unified two-stage approach based on semidefinite-relaxation techniques for solving such estimation problems. The key idea of the proposed procedure is to recognize that the optimal functional of the inner optimization problems can be approximated to any desired precision by a multivariate polynomial function by suitably exploiting recently proposed results in the field of parametric optimization. Two simulation examples are reported to show the effectiveness of the proposed approach.

\end{abstract}

\section{INTRODUCTION}
Estimation theory can roughly be defined as a branch of mathematics dealing with the problem of inferring the values of some unknown variables, usually called \textit{parameters}, from a set of empirical data related to the unknown parameters through a given, possibly uncertain, mathematical relation. Experimental data are usually obtained by means of measurement procedures that are known to be affected by uncertainty. Most of the results available in the estimation theory literature are based on a statistical description of the uncertainty affecting the data.

A worthwhile alternative to the stochastic description, is the so-called bounded-error or set-membership characterization where measurement errors are assumed to be \textit{unknown but bounded} (UBB), i.e., the measurement uncertainties are assumed to belong to a given bounded set. Such a description seems to be more suitable in those cases where either a priori statistical information is not available or the errors are better characterized in a deterministic way (e.g., systematic and class errors in measurement equipments, rounding and truncation errors in digital devices). Based on the UBB description of the uncertainty, a new paradigm called bounded-error or set-membership estimation has been proposed starting with the seminal work of Schweppe \cite{schw71}. In the last three decades, set-membership estimation theory has been the subject of extensive research efforts which led to a number of relevant results with emphasis on the application of the set-membership paradigm in the context of system identification. The interested reader is referred to the book \cite{minoplwa}, the survey papers \cite{milvic,wallah2} and the reference therein for a thorough review of the fundamental principles of the theory. Set-membership estimation algorithms can roughly be divided in two main categories: (i) set-valued estimators (see, e.g., \cite{milbel82,fohu82,pear88,ver94,viza96,cgvz98,cepire2011a,cepire2012a} and the references therein), aimed at deriving either exact or approximate descriptions of the so-called \textit{feasible parameter set}, i.e., the set of all possible parameter values consistent with the collected experimental data and a set of a-priori assumptions; (ii) pointwise estimators (see, e.g., \cite{MiTe85,kmvt1986,kamivi1988,mila95,ga1999,gaviza2000} and the references therein), that return a single element of the parameter space according to a given selection criteria.

In this paper we focus on the latter category and, in particular, on two classes of pointwise estimation algorithms called \textit{conditional estimators} and \textit{projection estimators} respectively. In a nutshell, a set-membership estimation algorithm is called a conditional estimator when the sought estimate is constrained to belong to a given set (see, e.g., \cite{kamivi1988,ga1999,gaviza2000,gkvz2000}), while it is called a projection estimator (see, e.g., \cite{kmvt1986,kamivi1988,mila95}) when the parameter estimate is sought by minimizing a certain norm of the so-called regression error. To the best of the authors' knowledge, most of the results presented in the bounded-error literature about conditional and/or projection estimation are derived under a set of simplifying hypotheses, including the assumptions that: (i) the operator relating the parameter and the experimental data is linear and is not affected by uncertainty, (ii) the error affecting the measured data belongs to simple-shaped convex sets (e.g. boxes, ellipsoids) and (iii) the parameter estimate to be computed is looked for in the entire parameter space or at most in a linearly parameterized subset of the parameter space.

In this work, by recognizing that the problems of computing the conditional and projection estimates require the solution to min-max optimization problems that share essentially the same structure, a unified approach is proposed to approximate to any desired precision the optimal (either conditional or projection) estimate by assuming that (i) the operator relating the parameter and the experimental data is a generic nonlinear polynomial function possibly dependent on a set of uncertain variables assumed to belong to a given semialgebraic set, (ii) the error affecting the measured data belongs to a semialgebraic set and (iii) the parameter estimate to be computed is sought in a semialgebraic subset of the parameter space. It is worth noticing that in full generality, solving nonconvex min-max optimization problems is a real challenge for which no general methodology is available.  An exception is a certain class of robust versions of some convex optimization problems when the uncertainty set has some special form. In this case, computationally tractable robust counterparts of these convex problems may exist. See for instance \cite{bental1,bental2} and the references therein.

The paper is organized as follows. The addressed estimation problem is formulated in Section II, where the proposed unified framework is also presented. A two-stage approach based on semidefinite-relaxation techniques for the solution of the considered class of estimation problems is then presented in Section III. The effectiveness of the proposed approach is demonstrated by means of two simulation examples in Section IV. Concluding remarks end the paper.

\section{PROBLEM FORMULATION}
In this paper we consider a class of parametric nonlinear set-membership estimation problems where a given nonlinear operator $\mathcal{F}$ maps the parameter $\theta \in \mathcal{P}_\theta \subseteq \mathbb{R}^\ell$ to be estimated into the output vector $\mathbf{w} \in \mathbb{R}^N$ as follows
\begin{equation}\label{model}
 \mathbf{w}= \mathcal{F}(\theta,\varepsilon_F)
\end{equation}
where $\varepsilon_F$ is an uncertain variable.
The set $\mathcal{P}_\theta$ takes into account possible prior information on the parameter $\theta$ to be estimated. In this work $\mathcal{P}_\theta$ is assumed to be a semialgebraic set of the form
\begin{equation}\label{theta_prior}
\mathcal{P}_\theta = \left\{\theta \in \mathbb{R}^\ell: k_z(\theta) \geq 0, z= 1,\ldots,s \right\}
\end{equation}
where $k_z,\ z=1,\ldots,m$ are multivariate polynomials in the $\ell$ components of the vector $\theta$.
Output measurements $\mathbf{y} \in \mathbb{R}^N$ are assumed to be corrupted by bounded noise as follows
\begin{equation}\label{output}
\mathbf{y} = g(\mathbf{w},\varepsilon_y)
\end{equation}
where $g$ is a polynomial function in the variable $\mathbf{w}$ and $\varepsilon_y$.
The uncertain variables $\varepsilon_F$ and $\varepsilon_y$ are assumed to belong to the following semialgebraic set
\begin{equation}\label{uncer}
\mathcal{S}_{\varepsilon_F,\varepsilon_y} = \left\{\varepsilon_F \in \mathbb{R}^q, \varepsilon_y \in \mathbb{R}^N: h_i(\varepsilon_F,\varepsilon_y) \geq 0, i = 1,\ldots,r \right\}
\end{equation}
with $h_i,\ i=1,\ldots,r$ being multivariate polynomials in the $q$ components of the vector $\varepsilon_F$ and the $N$ components of the vector $\varepsilon_y$.
In this work we restrict our attention to the case where the nonlinear operator $\mathcal{F}$ is a multivariate polynomial function of variables $\theta$ and $\varepsilon_F$. \\
In the set-membership estimation framework, all the values of $\theta$ that are consistent with the assumed model structure described in \eqref{model}, collected measurements $\mathbf{y}$ \eqref{output} and bounds on the uncertainty variables \eqref{uncer} 
are considered as feasible solutions to the estimation problem. The set $\mathcal{D}_\theta$ of all such values is called the \textit{feasible parameter set} (FPS) and can be defined as the projection onto the parameter space $\mathbb{R}^\ell$ of the following set $\mathcal{D}$:
\begin{equation}\label{augfps}
\begin{split}
\mathcal{D} = &\left\{(\theta,\varepsilon_F,\varepsilon_y) \in \mathcal{P}_\theta \times \mathbb{R}^q \times \mathbb{R}^{N}:\right.\\
&\left. \mathbf{y} = g(\mathcal{F}(\theta,\varepsilon_F),\varepsilon_y) ,\ (\varepsilon_F, \varepsilon_y) \in \mathcal{S}_{\varepsilon_F,\varepsilon_y}
\right\}.
\end{split}
\end{equation}
In this paper we provide a unified approach to address some minmax estimation problems arising in set-membership identification. In particular we will refer throughout the paper to the general formulation of the considered identification problems presented in Section \ref{gen_sec}.

\subsection{General min-max formulation of the considered class of SM estimation problems}\label{gen_sec}

The contribution of the paper is to provide an approach to solve the following general nonlinear set-membership estimation problems:
%
\begin{equation}\label{gen_minmax}
\textbf{P1:}\ \theta_{\mbox{\tiny rob}} = \arg \min_{\theta \in \mathcal{M}} \max_{\alpha \in \mathcal{S}_{\alpha,\theta}} J(\theta,\alpha)
\end{equation}
where $\mathcal{M}$ can be either $\mathcal{D}_\theta$ or $\mathcal{P}_\theta$ or any other possible subset of $\mathbb{R}^\ell$ described by a set of polynomial inequalities, $\alpha \in \mathbb{R}^T$,
%
\begin{equation}
\mathcal{S}_{\alpha,\theta} = \left\{\alpha \in \mathbb{R}^T: d_\mu(\alpha,\theta) \geq 0, \mu= 1,\ldots,M \right\}
\end{equation}
is a semialgebraic set, and $d_\mu,\ \mu=1,\ldots,M$ are multivariate polynomials in the components of the vectors $\alpha$ and $\theta$.
In the rest of the paper we will refer to \eqref{gen_minmax} as \textit{robust SM estimation problem} \textbf{P1}.
To the best of the authors' knowledge this is the first attempt towards the solution of a robust SM estimation problem in such a general form.

It is worth noting that computation of the global optimal solution $\theta_{\mbox{\tiny rob}}$ of problem \eqref{gen_minmax} is a difficult and challenging problem since \eqref{gen_minmax} is an NP-hard robust nonconvex optimization problem. As already mentioned, in full generality there is no methodology to solve \eqref{gen_minmax} except for
robust versions of some convex optimization problems
when the uncertainty set has some special form. Indeed, such problems have computationally tractable robust counterparts
as described, for example, in \cite{bental1,bental2}.

In sections \ref{cond_sec} and \ref{proj_sec} reported below, we show that the two classes of set-membership estimation problems considered in the paper (\textit{conditional central estimation} and \textit{robust projection estimation}) can be interpreted as two specific instances of problem \textbf{P1}.

As will be discussed in details in the following, the approach proposed in this paper to solve problem \textbf{P1} relies on the results presented in \cite{2010Ala} (see Theorem 1 of this paper) that were derived under the following assumption:
\\
\begin{assumption}\label{ass_empty}
For each fixed value of $\theta = \overline{\theta}$ the set $\mathcal{S}_{\alpha,\overline{\theta}}$ is nonempty. \\
\end{assumption}

In sections \ref{cond_sec} and \ref{proj_sec} we show that Assumption \ref{ass_empty} is always satisfied for the specific classes of estimation problems considered in the paper.
\subsection{Conditional central estimation for set-membership nonlinear errors-in-variables parametric identification}\label{cond_sec}
Consider a single-input single-output (SISO) nonlinear dynamic system  
which transforms the noise-free input  $x_t$ into the noise-free
output $w_t$ according to
%
\begin{equation}\label{eiv1}
w_t = f(\theta,w_{t-1},w_{t-2},\ldots w_{t-na}, x_{t-1},x_{t-2},\ldots x_{t-nb})
\end{equation}
where $\theta$ is the parameter vector to be estimated and $f$ is assumed to be a multivariate polynomial function. Both input and output data sequences are corrupted by additive noise, $\xi_t$ and $\eta_t$ respectively, i.e.
\begin{equation}\label{eiv2}
u_t  =  x_t + \xi_t, \; \; \; y_t  =  w_t + \eta_t.
\end{equation}
The noise samples $\xi_t$ and $\eta_t$ are bounded by given $\Delta\xi_t$ and $\Delta\eta_t$ respectively, that is:
\begin{equation}\label{eiv3}
\mid \xi_t \mid  \leq  \Delta \xi_t, \; \; \;  \mid \eta_t \mid  \leq  \Delta \eta_t.
\end{equation}
The nonlinear errors-in-variables (NEIV) model structure described by \eqref{eiv1}--\eqref{eiv3} can be written in the form \eqref{model} by setting:
\begin{equation}
{\theta} = \left[a_1 \;\;\; \ldots \;\;\; a_{na} \;\;\; b_0
\;\;\; b_1 \;\;\; \ldots \;\;\; b_{nb} \right]^{\tr},
\end{equation}
\begin{equation}
\mathbf{w} = [w_1\ w_2 \ldots w_N],
\end{equation}
\begin{equation}
\mathbf{y} = [y_1\ y_2 \ldots y_N],
\end{equation}
\begin{equation}\label{nlin_eiv_epsF}
\varepsilon_F = \left[\varepsilon_{F,1}\ \varepsilon_{F,2}\ \ldots\ \varepsilon_{F,N}\right]^{\tr}
\end{equation}
\begin{equation}
\varepsilon_{y,t} = [\eta_{1}\ \eta_{t-2} \ldots \eta_{N}],
\end{equation}
\begin{equation}\label{nlin_eiv_F}
\mathcal{F}(\theta,\varepsilon_F) = \left[\mathcal{F}_1(\theta,\varepsilon_{F,1})\ \mathcal{F}_2(\theta,\varepsilon_{F,2})\ \ldots\ \mathcal{F}_N(\theta,\varepsilon_{F,N})\right]^{\tr}
\end{equation}
where, for all $t=1,2,\ldots,N$,
\begin{equation}
\begin{split}
\mathcal{F}_t(\theta,\varepsilon_F) = f(\theta,y_{t-1}-\eta_{t-1},\ldots y_{t-na}-\eta_{t-na}, u_{t-1}-\xi_{t-1},\ldots u_{t-nb}-\xi_{t-nb}).
\end{split}
\end{equation}
and
\begin{equation}
\varepsilon_{F,t} = [\eta_{t-1}\ \eta_{t-2} \ldots \eta_{t-na}\ \xi_{t}\ \xi_{t-1} \ldots \xi_{t-nb}].
\end{equation}
The NEIV model structure considered in \eqref{eiv1} is quite general and comprises many important nonlinear model classes usually considered in system identification including, among the other, Hammerstein, Wiener and Lur'e models (see, e.g., \cite{cepire2012b,cepire2012c,cere2006}),  linear-parameter-varying (LPV) models with polynomial dependence on the scheduling variables \cite{toth_book,cepire_acc11}, polynomial nonlinear autoregressive (NARX) and nonlinear output error (NOE) models.
\\
In the NEIV bounded error problem considered here, the set $\mathcal{S}_{\varepsilon_F,\varepsilon_y}$ 
is simply described by the linear inequalities in \eqref{eiv3}, while $\mathcal{D}$ in \eqref{augfps} is the set of all parameter values and noise samples consistent with the collected experimental data, the model structure in \eqref{eiv1} and \eqref{eiv2}, and noise bounds in \eqref{eiv3}.
\\
As far as the class of linear time-invariant dynamic systems is considered, the nonlinear operator $\mathcal{F}(\theta,\varepsilon_F)$ simplifies to:
\begin{equation}\label{bilinF}
\mathcal{F}(\theta,\varepsilon_F) = F(\varepsilon_F)^{\tr}\theta
\end{equation}
where
\begin{equation}
F(\varepsilon_F) = \left[F_1\ F_2\ \ldots\ F_N\right]^{\tr}
\end{equation}
and, for all $t=1,2,\ldots,N$,
\begin{equation}
\begin{split}
F_t = &\left[-y_{t-1}+\eta_{t-1}\ -y_{t-2}+\eta_{t-2} \ldots  -y_{t-na}+\eta_{t-na} \right.\\
&\left.u_{t}-\xi_{t}\ \ u_{t-1}-\xi_{t-1} \ldots u_{t-nb}-\xi_{t-nb} \right].
\end{split}
\end{equation}
\\

As is well known, the linear EIV identification set-up (see \cite{2007Aso} for details) is quite general in the sense that many other common linear identification problems can be written in this framework. In fact, the problem of identifying an output error (OE) model is obtained by setting $\xi_t=0$, the case of finite-impulse-response (FIR) models is obtained for $n_a=0$, while the structure in \eqref{eiv1} and \eqref{eiv2} turns out to be an equation error (EE) model when $\xi_t=0$ and $\eta_t = \sum_{i=1}^{na}a_i\eta_{t-i}$.
\\

It is worth noting that in the general NEIV problems $\mathcal{D}$ is a nonconvex semialgebraic set since the constraints $\mathbf{y} - \mathcal{F}(\theta,\varepsilon_F) = \varepsilon_y$ in \eqref{augfps} are polynomial functions of $\theta$ and $\varepsilon_F$ and, moreover, the same property holds true in the simplified linear-time-invariant case, where constraints $\mathbf{y} - \mathcal{F}(\theta,\varepsilon_F) = \varepsilon_y$ in \eqref{augfps} are bilinear in $\theta$ and $\varepsilon_F$ due to \eqref{bilinF}.
\\

Although $\mathcal{D}$ is the set of all nonlinear dynamic models with structure \eqref{eiv1} that are consistent with experimental data and measurement error bounds, neither the feasible parameter set nor the tight outerbounding box derived in \cite{cepire2012a,cepire2011a} can be straightforwardly exploited for controller design or system behavior simulation. Thus, in many applications, the problem of selecting a single model among the feasible
ones arises. 
One of the most common choices in the SM literature is to look for the value 
of the parameter $\theta$ that minimizes the worst case $\ell_p$ estimation error computed over the entire feasible set, i.e. 
\begin{equation}\label{cest_eiv1}
\theta_c \doteq \arg \min_{\theta \in \mathbb{R}^\ell} \max_{(\theta_\nu,\varepsilon_F,\varepsilon_y) \in \mathcal{D_\nu}} \|\theta_\nu - \theta\|_p.
\end{equation}
where
\begin{equation}\label{augfps_nu}
\begin{split}
\mathcal{D_\nu} = &\left\{(\theta_\nu,\varepsilon_F,\varepsilon_y) \in \mathcal{P}_{\theta_\nu} \times \mathbb{R}^q \times \mathbb{R}^{N}:\right.\\
&\left. \mathbf{y} = g(\mathcal{F}(\theta_\nu,\varepsilon_F),\varepsilon_y) ,\ (\varepsilon_F, \varepsilon_y) \in \mathcal{S}_{\varepsilon_F,\varepsilon_y}
\right\},
\end{split}
\end{equation}
and $\|\cdot\|_p$ is the $\ell_p$-norm of a vector.

The estimate $\theta_c$ computed by solving \eqref{cest_eiv1} is the so-called $\ell_p$-\textit{Chebyshev center} of $\mathcal{D}$, also called \textit{central estimate} in the SM literature.
\\
\begin{remark}
In the case $p=\infty$, the central estimate is the center of the minimum-volume-box outerbounding $\mathcal{D}_\theta$ and can be computed by exploiting the convex relaxation approach proposed in \cite{cepire2012a}.
\\
\end{remark}

Although the central estimate provides the minimum of the worst-case estimation error, it may show some undesirable features in the case of EIV identification or, more generally, when the set $\mathcal{D}$ is nonconvex. More precisely, in those cases, the Chebyshev center $\theta_c$ is neither guaranteed to belong to the set $\mathcal{D}_\theta$ nor to the set $\mathcal{P}_\theta$ and, as a consequence, the identified LTI system could result inconsistent either with the experimental data or with some of the a-priori physical information on the parameter $\theta$.
In order to avoid such drawbacks, it is most desirable to force the computed parameter estimate to belong to a given set $\mathcal{M}$ by modifying the optimization problem \eqref{cest_eiv1} as follows
\begin{equation}\label{ccest}
\theta_c^{\mathcal{M}} \doteq \arg \min_{\theta \in \mathcal{M}} \max_{(\theta_\nu,\varepsilon_F,\varepsilon_y) \in \mathcal{D_\nu}} \|\theta_\nu - \theta\|_p
\end{equation}
where $\mathcal{M}\subset\mathbb{R}^\ell$ is assumed to be a semialgebraic set described by polynomial inequalities. Such a set is: (a) $\mathcal{D}_\theta$ if our aim is to constrain the computed estimate to belong to the feasible parameter set, (b) $\mathcal{P}_\theta$ if it is required to guarantee that the identified system satisfies the set of available a-priori information, (c) $\mathcal{D}_\theta \cap \mathcal{P}_\theta$, or (d) any other semialgebraic set if, more generally, we want to force the identified system to belong to a particular model class. Problem \eqref{ccest} falls into the class of \textit{Conditional set-membership estimation problems} and, in particular, $\theta_c^\mathcal{M}$ is referred to as the \textit{conditional Chebyshev center} of the feasible parameter set $\mathcal{D}$ with respect to the model class $\mathcal{M}$.
The problem of conditional central estimation is still a challenging problem in the field of set-membership identification/information based-complexity and a number of papers have appeared in the literature
in the last decades on the subject (we refer the reader to the paper \cite{gaviza2000} and
the references therein for a thorough review). In particular, conditional central algorithms have been proposed to effectively address the problems of reduced order modeling \cite{kamivi1988}, set-membership state smoothing and filtering \cite{gaviza1999} and worst-case identification \cite{gaviza2000}. For such problems, computationally efficient and/or closed-form solutions to the problem of conditional central estimation have been derived assuming that: (i) $\mathcal{F}$ is a linear operator in both $\theta$ and $\varepsilon_F$, (ii) $\varepsilon_F=0$, (iii) $\mathcal{M}$ is a linear manifold, and (iv) $\mathcal{S}_{\varepsilon_y}$ is a simple-shaped convex set (usually a box, an ellipsoid or a polytope). Unfortunately, such assumptions are not satisfied in many relevant identification problems including, for example, the EIV problem considered in this section. As an additional motivating example leading to the class of estimation problems defined in \eqref{ccest}, we mention the problem of identifying input-output linear systems that are a-priori known to be bounded-input bounded-output (BIBO) stable. In this case, we are interested in computing the optimal estimate of the system parameter, in the Chebyshev center sense, over the set $\mathcal{P}_{stab}$ of all the parameter values that guarantee BIBO stability of the system. Since, as shown in \cite{cepire2011b}, the set $\mathcal{P}_{stab}$ is semialgebraic and described by polynomial inequalities, such a problem naturally leads to a conditional estimation problem of the general form \eqref{ccest} where the set $\mathcal{M} = \mathcal{P}_{stab}$.
\\
\begin{remark}
It is worth remarking that in problem \eqref{ccest} $\mathcal{S}_{\alpha,\theta}$ coincides with $\mathcal{D_\nu}$ where $\alpha = (\theta_\nu,\varepsilon_F,\varepsilon_y)$. Therefore, since $\mathcal{D_\nu}$ does not depend on $\theta$, Assumption \ref{ass_empty} is satisfied as long as $\mathcal{D_\nu}$ is a nonempty set, that is a common assumption in Set-membership identification, often satisfied in practice unless the identification problem is not well posed (e.g. the considered a-priori assumption on the system to be identified are completely wrong).
\end{remark}

\subsection{Robust conditional projection estimation}\label{proj_sec}
Another class of estimator of particular interest in the set-membership/information-based complexity (IBC) framework is given by the so-called projection algorithms (see, e.g., \cite{kmvt1986,mila95} and the references therein) where the parameter estimate is computed by solving an optimization problem of the following form:
\begin{equation}\label{proj}
\theta_p = \arg \min_{\theta \in \mathcal{M}}\|e_R(\varepsilon_F,\theta)\|_p
\end{equation}
where $e_R = \mathbf{y}-\mathcal{F}(\theta,\varepsilon_F)$ is the so-called regression error (see, e.g., \cite{1999Alj,1989Asost}) and the set $\mathcal{M}$ either coincides with $\mathbb{R}^\ell$ or is a subset of $\mathbb{R}^\ell$. In the latter case the obtained estimator is called a \textit{conditional projection algorithm} (see, e.g., \cite{kamivi1988}). The optimization criterion $\|e_R\|_p$ is widely adopted in the identification literature (see, e.g., \cite{1999Alj,1983Ny,AHL1996} and the references therein) and, in particular, the popular least-square estimator (see, e.g., \cite{1999Alj,1989Asost}) is obtained by setting $p=2$, $\mathcal{M} = \mathbb{R}^\ell$ and assuming that $\varepsilon_F = \mathbf{0}$ where $\mathbf{0}$ is the null element of the space $\mathbb{R}^\ell$. Projection estimators and their optimality properties have been extensively investigated in the SM framework and a number of interesting results have been derived (see, e.g., \cite{kmvt1986,mila95,gkvz2000}). However, to the best of the authors' knowledge, most of such results have been obtained under the assumptions that: (i) $\mathcal{F}$ is a linear operator in both $\theta$ and $\varepsilon_F$, (ii) $\mathcal{M}$ is a linear manifold, (iii) $\mathcal{S}_{\varepsilon_y}$ is a simple-shaped convex set (usually a box, an ellipsoid or a polytope) and, most important, (iv) assuming that the operator $\mathcal{F}$ is not affected by uncertainty, i.e., $\varepsilon_F=0$. In this work, we consider the following generalization of \eqref{proj}
\begin{equation}\label{robproj}
\theta_p^r = \arg \min_{\theta \in \mathcal{M}} \max_{(\varepsilon_F,\varepsilon_y) \in \mathcal{S}_{\varepsilon_F,\varepsilon_y}}\|e_R(\theta,\varepsilon_F)\|_p
\end{equation}
where, in order to take care of the effects of the uncertainty affecting the problem, we look for the parameter estimate $\theta_p^r$ that minimizes the worst case $\ell_p$ regression error computed over the entire uncertainty set $\mathcal{S}_{\varepsilon_F,\varepsilon_y}$. As far as the set $\mathcal{M}$ is concerned, we only assume that $\mathcal{M}$ be a subset of $\mathbb{R}^\ell$ described by polynomial inequalities. In such a way, the user is allowed, for example, to constrain the optimal estimate to belong to the feasible parameter set ($\mathcal{M}=\mathcal{D}_\theta$), to guarantee that $\theta_p^r$ satisfies the available a-priori physical information ($\mathcal{M}=\mathcal{P}_\theta$) or, more generally, to force the estimated model to belong to a specific, possibly reduced-order, model class (see, e.g., \cite{kamivi1988,gkvz2000}). In the rest of the paper we refer to $\theta_p^r$ as the \textit{robust p-norm projection estimate} or RPE for short. It is worth noticing that the problem of computing the optimal projection estimate for the case $p=2$ (least squares estimate) in the presence of an uncertainty $\varepsilon_F \neq 0$ under the restricting assumptions that $\mathcal{F}$ is linear operator and the set $\mathcal{S}_{\varepsilon_F,\varepsilon_y}$ is convex, has been widely studied also outside the context of set-membership estimation and a number of different approaches have been proposed (see, e.g., paper \cite{elgleb97} and the references therein for a thorough review of the available methods and results).
\\
\begin{remark}
It is worth remarking that in problem \eqref{robproj} $\mathcal{S}_{\alpha,\theta}$ coincides with $\mathcal{S}_{\varepsilon_F,\varepsilon_y}$ where $\alpha = (\varepsilon_F,\varepsilon_y)$. Therefore, since $\mathcal{S}_{\varepsilon_F,\varepsilon_y}$ does not depend on $\theta$, Assumption \ref{ass_empty} is satisfied for all $\theta \in \mathbb{R}^\ell$ as long as $\mathcal{S}_{\varepsilon_F,\varepsilon_y}$ is a nonempty set, that is a common assumption in Set-membership identification, often satisfied in practice unless the considered a-priori assumption on the measurement errors are completely wrong.
\end{remark}

\section{A semidefinite relaxation approach}

In this section a two-stage approach is proposed to approximate to any desired precision the global optimal solution to the general SM robust identification problem \eqref{gen_minmax}. The proposed approach is based on the following basic observations:
\begin{itemize}
\item[(i)] problem \eqref{gen_minmax} is a \textit{two-players non-cooperative game} (see, e.g., \cite{Nash51}) where $\mathcal{M}$ and $\mathcal{S}_{\alpha,\theta}$ are the action sets of the first and the second player respectively;
\item[(ii)] from the point of view of the second player, \textbf{P1} is a \textit{parametric optimization problem} (see \cite{2010Ala} and the references therein) in the sense that the optimal value
of  the inner maximization problem in (\ref{gen_minmax}) is a function of the decision of player 1, i.e. the value of the parameter $\theta$;
\item[(iii)] the optimal value function $\widetilde{J}$ of the parametric inner maximization problem is given by
    \begin{equation}\label{maxproblem}
        \widetilde{J}(\theta) \doteq J(\theta,\alpha^*(\theta)) = \max_{\alpha \in \mathcal{S}_{\alpha,\theta}} J(\theta,\alpha)
    \end{equation}
    and it is a function of parameter $\theta$ only.
\end{itemize}
Thanks to observations (i)--(iii) above, once $\widetilde{J}(\theta)$ is known, problem \textbf{P1} simplifies to the following optimization problem:
\begin{equation}\label{gen_minmax_ver2}
    \textbf{P2:}\ \theta_{\mbox{\tiny rob}} = \arg \min_{\theta \in \mathcal{M}}\widetilde{J}(\theta).
\end{equation}
Unfortunately, a general methodology to derive an exact closed-form expression for  function $\widetilde{J}(\theta)$ is not available and, therefore, a two-stage procedure is proposed here to approximate the global optimal solution of problem \textbf{P1}. In the first stage, a polynomial function
$\widetilde{J}^*_\tau(\theta)$ of degree $2\tau$, upper approximating (in a strong sense) the optimal value function $\widetilde{J}(\theta)$ of the parametric inner maximization problem is computed. Then, in the second stage, problem \textbf{P2} is replaced with the following polynomial optimization problem:
\begin{equation}\label{gen_minmax_ver3}
    \textbf{P3:}\ \theta_{\tau} = \arg \min_{\theta \in \mathcal{M}}\widetilde{J}^*_\tau(\theta).
\end{equation}

\subsection{Polynomial approximation of the function $\widetilde{J}(\theta)$}\label{pol_approx}
Polynomial approximation of the function $\widetilde{J}(\theta)$ is performed here by exploiting the methodology for parametric polynomial optimization proposed in \cite{2010Ala}. However, in order to apply the results presented in \cite{2010Ala}, we first need to compute a set $\mathcal{R}_\theta \subset \mathbb{R}^\ell$ outerbounding $\mathcal{M}$, whose shape is simple enough to allow one to easily compute all the moments of a Borel probability measure $\varphi$ with uniform distribution on $\mathcal{R}_\theta$. In this work we exploit the SDP-relaxation based procedure proposed in \cite{cepire2012a} to compute the minimum-volume axis-aligned box containing the set $\mathcal{M}$. Once the box $\mathcal{R}_\theta=[\underline{\theta},\overline{\theta}]\subset\mathbb{R}^\ell$, i.e.,
\[\mathcal{R}_\theta\,=\,\{\theta\in\mathbb{R}^\ell\::\:\phi_k(\theta)\geq0,\: \phi_k(\theta)\doteq(\overline{\theta_k}-\theta_k)(\theta_k-\underline{\theta}_k),\quad k=1,\ldots,\ell\}\]
has been computed, we can 
formulate the following optimization problem where we look for the upper polynomial approximation $\widetilde{J}_\tau(\theta)$ of the optimal value function $\widetilde{J}(\theta)$ such that the integral $\int_{\mathcal{R}_\theta} \widetilde{J}_\tau(\theta)d\varphi(\theta)$ is minimized:
\begin{equation}
\begin{split}
&\min_{\tilde{J}_\tau(\theta)} \int_{\mathcal{R}_\theta} \tilde{J}_\tau(\theta)d\varphi
\\
&\mbox{s.t.}\ \tilde{J}_\tau(\theta) \geq J(\theta,\alpha)\ \ \forall (\theta,\alpha) \in \mathcal{R}_\theta \times \mathcal{S}_{\alpha,\theta}.
\end{split}
\end{equation}
By noticing that (i) the objective function can be written as a linear combination of the moments of the uniform distribution measure supported on $\mathcal{R}_\theta$ and (ii) the inequality constraint can be approximately replaced by a SOS constraint, the following semidefinite relaxed problem is obtained (\cite{2010Ala}):
\begin{equation}\label{SOS_relax}
\left\{
\begin{split}
\min_{\lambda_\beta,\sigma_\mu,\psi_k} &\sum_{\beta \in \mathbb{N}_{2\tau}^\ell} \lambda_{\beta}\gamma_\beta \\
\mbox{s.t.}\ &
\sum_{\beta \in \mathbb{N}_{2i}^p}\lambda_{\beta}\theta^\beta - J(\theta,\alpha) = \sigma_0(\theta,\alpha)+\sum_{\mu=1}^M \sigma_\mu(\theta,\alpha)d_\mu(\alpha,\theta)+\sum_{k=1}^\ell\psi_k(\theta,\alpha)\,\phi_k(\theta)\\
& \sigma_\mu \subset \Sigma[\theta,\alpha],\ \mu=1, \ldots, M \\
& \psi_k \subset \Sigma[\theta,\alpha],\ k=1, \ldots, \ell \\
& \deg(\sigma_0) \leq 2\tau;\ \deg(\sigma_\mu d_\mu) \leq 2\tau,\ \mu=1, \ldots, M,\ \deg(\psi_k \phi_k)\leq 2\tau,\ k=1, \ldots, \ell,
\end{split}
\right.
\end{equation}
where for each $\beta = [\beta_1 \ldots \beta_\ell] \in \mathbb{N}^\ell$ and $\theta = [\theta_1 \ldots \theta_\ell]$ the notation $\theta^\beta$ stands for the monomial $\theta_1^{\beta_1} \theta_2^{\beta_2} \ldots \theta_\ell^{\beta_\ell}$, $\mathbb{N}_{2\tau}^\ell = \{\beta \in \mathbb{N}^\ell:\sum_j \beta_j \leq 2\tau\}$, $\Sigma[\theta,\alpha]$ is the set of SOS polynomials in the variables $\theta$ and $\alpha$, while $\gamma_\beta$ are the moments of the Borel probability measure $\varphi$ with uniform distribution on $\mathcal{R}_\theta$, defined as (see, e.g., \cite{2010Lasbook}):
\begin{equation}
\gamma_\beta \doteq \int_{\mathcal{R}_\theta} \theta^{\beta}d\varphi(\theta).
\end{equation}
\begin{lemma}
\label{lemma-existence}
If $\mathcal{R}_\theta\times\mathcal{S}_{\alpha,\theta}$ contains an open set, then the semidefinite program (\ref{SOS_relax}) has an optimal solution
$(\overline{\lambda}_\beta^*,\overline{\sigma}_\mu^*,\overline{\psi}_k^*)$, $\mu=0, \ldots, M$, $k=1,\ldots,\ell$, provided that tau is sufficiently large.
\end{lemma}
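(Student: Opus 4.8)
The plan is to prove the two ingredients an existence statement for a conic program requires: that \eqref{SOS_relax} is feasible once $\tau$ is large, and that its (finite) infimum is actually attained. Throughout I write $\mathbf{K}\doteq\mathcal{R}_\theta\times\mathcal{S}_{\alpha,\theta}$ and $p_\lambda(\theta)\doteq\sum_{\beta\in\mathbb{N}_{2\tau}^\ell}\lambda_\beta\theta^\beta=\tilde{J}_\tau(\theta)$, and I use that $\mathbf{K}$ is compact (the box $\mathcal{R}_\theta$ is compact and, in the estimation problems of interest, the noise/feasibility bounds make $\mathcal{S}_{\alpha,\theta}$ bounded) and that the quadratic module generated by the $d_\mu$ and the $\phi_k$ is Archimedean, so that Putinar's Positivstellensatz \cite{2010Lasbook} applies on $\mathbf{K}$; the box constraints $\phi_k$ already bound $\theta$, and a redundant ball constraint on $\alpha$ can be adjoined when $\mathcal{S}_{\alpha,\theta}$ is bounded.

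First I would settle feasibility. Since $J$ is a polynomial and $\mathbf{K}$ is compact, $C\doteq\max_{\mathbf{K}}J$ is finite and $C+1-J(\theta,\alpha)$ is strictly positive on $\mathbf{K}$; by Putinar it admits a representation $\sigma_0+\sum_\mu\sigma_\mu d_\mu+\sum_k\psi_k\phi_k$ with SOS multipliers of degrees bounded by some $2\tau_0$. Taking $\tilde{J}_\tau\equiv C+1$ then gives a feasible point of \eqref{SOS_relax} for every $\tau\ge\tau_0$, which is precisely what the clause ``provided $\tau$ sufficiently large'' encodes. For boundedness from below, the identity in \eqref{SOS_relax} forces $p_\lambda(\theta)\ge J(\theta,\alpha)$ for all $(\theta,\alpha)\in\mathbf{K}$, because there the right-hand side is a sum of SOS terms times the nonnegative $d_\mu,\phi_k$; by Assumption \ref{ass_empty} every $\theta\in\mathcal{R}_\theta$ admits some $\alpha$ with $(\theta,\alpha)\in\mathbf{K}$, so $p_\lambda(\theta)\ge-\|J\|_{\infty,\mathbf{K}}$ on $\mathcal{R}_\theta$, and integrating against the probability measure $\varphi$ yields $\sum_\beta\lambda_\beta\gamma_\beta\ge-\|J\|_{\infty,\mathbf{K}}>-\infty$.

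The hard part will be attainment, since a semidefinite program need not reach its infimum: I must show a minimizing sequence $(\lambda^n,\sigma^n_\mu,\psi^n_k)$ stays bounded, and this is exactly where the hypothesis that $\mathbf{K}$ contains an open set is indispensable. On the finite-dimensional space of polynomials of degree $\le2\tau$, the map $p\mapsto\int|p|\,d\varphi$ is a genuine norm, because $\varphi$ is uniform on a box with nonempty interior and a polynomial vanishing Lebesgue-a.e.\ on that box vanishes identically. Combining the lower bound $p_{\lambda^n}\ge-\|J\|_{\infty,\mathbf{K}}$ on $\mathcal{R}_\theta$ with the bounded objective $\int p_{\lambda^n}\,d\varphi$ controls $\int|p_{\lambda^n}|\,d\varphi$ and hence, by equivalence of norms, the coefficient vectors $\lambda^n$. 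To bound the Gram matrices I would fix an open ball $B$ with $\overline{B}\subset\mathbf{K}$ on which every non-trivial $d_\mu$ and $\phi_k$ exceeds some $\delta>0$ (such a ball exists once the identically-zero constraint polynomials are discarded, since the zero set of a nonzero polynomial is Lebesgue-null). On $B$ the SOS terms are nonnegative and their sum equals the now-bounded polynomial $p_{\lambda^n}-J$, so each term is uniformly bounded on $B$, giving $\sigma^n_0\le\text{const}$ and $\sigma^n_\mu\le\text{const}/\delta$, $\psi^n_k\le\text{const}/\delta$ on $B$. Writing an SOS as $\langle G,mm^{\tr}\rangle$ with $m$ the monomial vector and integrating over $B$ gives $\langle G^n,H_B\rangle\le\text{const}$ with $H_B\doteq\int_B mm^{\tr}\succ0$ (again by nonempty interior), whence $\mathrm{tr}(G^n)\le\langle G^n,H_B\rangle/\lambda_{\min}(H_B)$ is bounded and so is $G^n\succeq0$.

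Finally I would extract a convergent subsequence: the PSD cone is closed and the coefficient-matching equality of \eqref{SOS_relax} is a closed linear condition, so the limit $(\overline{\lambda}^*_\beta,\overline{\sigma}^*_\mu,\overline{\psi}^*_k)$, $\mu=0,\ldots,M$, $k=1,\ldots,\ell$, is feasible, and by continuity of $\lambda\mapsto\sum_\beta\lambda_\beta\gamma_\beta$ it attains the optimal value. The two points I expect to need the most care are the selection of the ball $B$ away from the zero sets of the constraint polynomials and the passage from a pointwise bound on an SOS polynomial to a trace bound on its Gram matrix; both hinge on the open-set hypothesis through the positive-definiteness of the moment matrices $H_B$ and of the $\varphi$-moment matrix on $\mathcal{R}_\theta$.
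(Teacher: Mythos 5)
Your proof is correct, and it reaches the heart of the lemma---attainment---by a genuinely different route from the paper's. For feasibility and finiteness of the value you and the paper do exactly the same thing (Putinar's Positivstellensatz applied to a constant minus $J$, giving a degree bound $2\tau_0$), but for attainment the paper never examines a minimizing sequence: it passes to the dual moment program (\ref{SOS_dual}) and verifies Slater's condition there, constructing a strictly feasible point as the moment sequence of a measure $\psi$ built by disintegration (the uniform distribution $\varphi$ on $\mathcal{R}_\theta$ composed with a stochastic kernel that is uniform on $\mathcal{S}_{\alpha,\theta}$ over the projection of the open set $O$); the open-set hypothesis forces the moment and localizing matrices of this sequence to be positive definite, since a polynomial whose square integrates to zero against $\psi$ would have to vanish on $O$, and attainment of (\ref{SOS_relax}) then follows from the standard conic-duality theorem. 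Your route is instead a direct primal compactness argument: the coefficients $\lambda^n$ are bounded via equivalence of norms on the finite-dimensional space of polynomials of degree at most $2\tau$ (where $p \mapsto \int |p|\,d\varphi$ is a norm precisely because $\mathcal{R}_\theta$ has nonempty interior), the Gram matrices are bounded by integrating the defining identity over a ball $B \subset O$ on which the nontrivial $d_\mu$ and $\phi_k$ exceed some $\delta>0$ and using $\langle G, H_B\rangle \geq \lambda_{\min}(H_B)\,\mathrm{tr}(G)$ with $H_B = \int_B m\,m^{\top}\succ 0$, and a limit point of a minimizing sequence is feasible by closedness of the PSD cone and of the linear coefficient-matching constraints. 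Both proofs exploit the open set in the same spirit---upgrading nonnegativity to positive definiteness---and both rest on the same Archimedean hypothesis behind Putinar's theorem (implicit in the paper, explicitly flagged in your write-up, so this is not a gap relative to the paper). What the paper's argument buys is brevity and, as a by-product, strong duality between (\ref{SOS_relax}) and (\ref{SOS_dual}), which is useful computationally; what yours buys is a self-contained, elementary proof that avoids invoking the conic-duality attainment theorem and additionally shows that minimizing sequences remain bounded.
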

A detailed proof of Lemma \ref{lemma-existence} is postponed to the Appendix.

Next, by applying the results presented in \cite{2010Ala} about parametric polynomial optimization, it is possible to show that the optimal solution of \eqref{SOS_relax} enjoys the important property stated in the following theorem:
\begin{theorem}\label{approxL1}(\!\!\cite{2010Ala})
Let $(\overline{\lambda}_\beta^*,\overline{\sigma}_\mu^*,\overline{\psi}_k^*)$, $\mu=0, \ldots, M$, $k=1,\ldots,\ell$, be an optimal solution of problem \eqref{SOS_relax} for a given degree $\tau$ and let us define the polynomial $\widetilde{J}^*_\tau(\theta) = \sum_{\beta \in \mathbb{N}_{2\tau}^\ell} \overline{\lambda}_{\beta}^*\theta^\beta$. Then, $\widetilde{J}^*_\tau(\theta)$ converges to the optimal value function $\widetilde{J}(\theta)$ for the $L_1(\mathcal{R}_\theta,\varphi)$-norm as $\tau$ goes to infinity, i.e.:
\begin{equation}
\int_{\mathcal{R}_\theta} |\widetilde{J}^*_\tau(\theta)-\widetilde{J}(\theta)\vert\,d\varphi(\theta) \rightarrow 0.
\end{equation}
\end{theorem}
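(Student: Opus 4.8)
The plan is to recast the claimed $L_1$ convergence as the convergence of the optimal value of the semidefinite program \eqref{SOS_relax} to the integral $\int_{\mathcal{R}_\theta}\widetilde{J}\,d\varphi$, and then to bound that value from below and from above. First I would note that the polynomial identity in \eqref{SOS_relax} is exactly a Putinar-type certificate that $\sum_\beta \lambda_\beta\theta^\beta$ dominates $J(\theta,\alpha)$ on the set $\{\phi_k(\theta)\ge 0\}\cap\{d_\mu(\alpha,\theta)\ge 0\}=\mathcal{R}_\theta\times\mathcal{S}_{\alpha,\theta}$: since $\sigma_0,\sigma_\mu,\psi_k$ are sums of squares and $d_\mu,\phi_k\ge 0$ there, any feasible $\widetilde{J}_\tau$ satisfies $\widetilde{J}_\tau(\theta)\ge J(\theta,\alpha)$, hence $\widetilde{J}_\tau(\theta)\ge \max_{\alpha\in\mathcal{S}_{\alpha,\theta}}J(\theta,\alpha)=\widetilde{J}(\theta)$ for every $\theta\in\mathcal{R}_\theta$. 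Because the optimizer $\widetilde{J}^*_\tau$ dominates $\widetilde{J}$ pointwise, the absolute value drops and, recalling $\sum_\beta\overline{\lambda}^*_\beta\gamma_\beta=\int_{\mathcal{R}_\theta}\widetilde{J}^*_\tau\,d\varphi$,
\[
\int_{\mathcal{R}_\theta}\bigl|\widetilde{J}^*_\tau(\theta)-\widetilde{J}(\theta)\bigr|\,d\varphi(\theta)=\int_{\mathcal{R}_\theta}\widetilde{J}^*_\tau\,d\varphi-\int_{\mathcal{R}_\theta}\widetilde{J}\,d\varphi,
\]
which is exactly (optimal value of \eqref{SOS_relax}) minus $\int_{\mathcal{R}_\theta}\widetilde{J}\,d\varphi$. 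So the theorem is equivalent to showing this gap tends to $0$.

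For the easy direction I would invoke monotonicity and a lower bound: a solution feasible at degree $2\tau$ is feasible at degree $2(\tau+1)$, so the optimal values are non-increasing in $\tau$; and every feasible $\widetilde{J}_\tau\ge\widetilde{J}$ gives $\int\widetilde{J}_\tau\,d\varphi\ge\int\widetilde{J}\,d\varphi$, so the sequence is bounded below by $\int_{\mathcal{R}_\theta}\widetilde{J}\,d\varphi$. Hence the optimal values converge to some limit $L\ge\int_{\mathcal{R}_\theta}\widetilde{J}\,d\varphi$, and it remains only to prove $L\le\int_{\mathcal{R}_\theta}\widetilde{J}\,d\varphi$.

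The heart of the argument is to build, for each $\epsilon>0$, a polynomial feasible in \eqref{SOS_relax} for some finite $\tau$ whose integral exceeds $\int\widetilde{J}\,d\varphi$ by at most $\epsilon$. I would first record that, under Assumption \ref{ass_empty} and compactness of $\mathcal{R}_\theta\times\mathcal{S}_{\alpha,\theta}$, the correspondence $\theta\mapsto\mathcal{S}_{\alpha,\theta}$ is upper semicontinuous and compact-valued, so by Berge's maximum theorem $\widetilde{J}$ is upper semicontinuous and bounded on the box $\mathcal{R}_\theta$. Such a function can be approximated from above: its sup-convolutions $g_n(\theta)=\sup_{\theta'}\{\widetilde{J}(\theta')-n\|\theta-\theta'\|\}$ are Lipschitz and decrease pointwise to $\widetilde{J}$, so by monotone convergence $\int_{\mathcal{R}_\theta}g_n\,d\varphi\to\int_{\mathcal{R}_\theta}\widetilde{J}\,d\varphi$. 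Choosing $n$ with $\int(g_n-\widetilde{J})\,d\varphi<\epsilon/2$ and then invoking Stone--Weierstrass on the compact box to get a polynomial $q$ with $\|q-g_n\|_\infty<\eta$, the polynomial $p=q+\eta$ satisfies $p\ge g_n\ge\widetilde{J}$ on $\mathcal{R}_\theta$ and $\int(p-\widetilde{J})\,d\varphi<\epsilon$ for small $\eta$; replacing $p$ by $p+\delta$ makes $p+\delta-J$ \emph{strictly} positive on $\mathcal{R}_\theta\times\mathcal{S}_{\alpha,\theta}$.

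The final and most delicate step is to certify this strict positivity inside the SOS framework of \eqref{SOS_relax}. Here I would appeal to Putinar's Positivstellensatz (see \cite{2010Lasbook}): provided the quadratic module generated by $\{\phi_k\}_{k=1}^\ell\cup\{d_\mu\}_{\mu=1}^M$ is Archimedean — which holds because the box constraints $\phi_k$ already force boundedness, and can be secured in general by appending a redundant ball constraint — the strictly positive polynomial $p+\delta-J$ admits a representation $\sigma_0+\sum_\mu\sigma_\mu d_\mu+\sum_k\psi_k\phi_k$ with SOS multipliers $\sigma_0,\sigma_\mu,\psi_k$. For $\tau$ large enough the degree caps in \eqref{SOS_relax} accommodate this representation, so $p+\delta$ together with those multipliers is feasible, with objective $\int(p+\delta)\,d\varphi\le\int\widetilde{J}\,d\varphi+\epsilon+\delta$. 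Letting $\epsilon,\delta\downarrow 0$ yields $L\le\int_{\mathcal{R}_\theta}\widetilde{J}\,d\varphi$, closing the argument. I expect the Archimedean/Positivstellensatz step — together with verifying that $\widetilde{J}$ is regular enough (upper semicontinuous and measurable) for the approximation-from-above to be legitimate — to be the main obstacle, since it is precisely what converts the pointwise domination property into a finite-degree SOS certificate, and hence into genuine feasibility of the relaxation.
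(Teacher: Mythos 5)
Your proof is correct, and it is essentially the argument of the reference \cite{2010Ala} to which the paper defers in lieu of giving its own proof: pointwise domination $\widetilde{J}^*_\tau\geq\widetilde{J}$ plus monotonicity of the optimal values reduces the claim to showing those values converge to $\int_{\mathcal{R}_\theta}\widetilde{J}\,d\varphi$, and the converse bound is obtained by approximating the u.s.c.\ value function from above (sup-convolution, then Stone--Weierstrass) and turning the resulting strictly dominating polynomial into a feasible point of \eqref{SOS_relax} via Putinar's Positivstellensatz. The auxiliary facts you invoke are exactly the ones the paper records elsewhere --- upper semicontinuity of $\widetilde{J}$ is Proposition \ref{usc} (proved there by a direct subsequence argument rather than Berge's theorem), and Putinar's certificate for the quadratic module generated by the $d_\mu$ and $\phi_k$ is the same device used in the appendix proof of Lemma \ref{lemma-existence} --- with the one caveat, which you correctly flag and the paper leaves implicit, that this module must be Archimedean (compactness of $\mathcal{R}_\theta\times\mathcal{S}_{\alpha,\theta}$ alone does not suffice, but a redundant ball constraint secures it).
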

For the proof of Theorem \ref{approxL1} we refer the reader to the paper \cite{2010Ala}. We also have the following
property.

\begin{prop}
\label{usc}
The optimal value function $\theta\mapsto \widetilde{J}(\theta)$ is upper semicontinuous (u.s.c.) on $\mathcal{R}_\theta$.
\end{prop}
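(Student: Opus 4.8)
The plan is to prove upper semicontinuity through its sequential characterization: I will show that for every $\overline{\theta}\in\mathcal{R}_\theta$ and every sequence $\theta_n\to\overline{\theta}$ in $\mathcal{R}_\theta$ one has $\limsup_n \widetilde{J}(\theta_n)\leq\widetilde{J}(\overline{\theta})$. The three ingredients I would assemble are: (a) the objective $J$ is a multivariate polynomial, hence continuous on $\mathbb{R}^\ell\times\mathbb{R}^T$; (b) the constraint correspondence $\theta\mapsto\mathcal{S}_{\alpha,\theta}$ has closed graph, since each $d_\mu$ is a polynomial and therefore the set $\{(\theta,\alpha): d_\mu(\alpha,\theta)\geq0,\ \mu=1,\ldots,M\}$ is closed; and (c) as $\theta$ ranges over the compact box $\mathcal{R}_\theta$ the sets $\mathcal{S}_{\alpha,\theta}$ stay inside a fixed compact set, so the inner maximum is attained and all maximizers live in that compact set. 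Ingredient (c) is precisely the standing compactness requirement under which the parametric-optimization machinery of \cite{2010Ala} invoked in Theorem \ref{approxL1} applies; by Assumption \ref{ass_empty} each $\mathcal{S}_{\alpha,\overline{\theta}}$ is nonempty, so $\widetilde{J}(\overline{\theta})$ is well defined and finite.

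First I would fix $\overline{\theta}$ and pass to a subsequence (not relabeled) along which $\widetilde{J}(\theta_n)$ converges to $L:=\limsup_n\widetilde{J}(\theta_n)$. By (a) and (c) the maximum defining $\widetilde{J}(\theta_n)$ is attained at some $\alpha_n\in\mathcal{S}_{\alpha,\theta_n}$, i.e. $\widetilde{J}(\theta_n)=J(\theta_n,\alpha_n)$. Since all the $\alpha_n$ belong to a common compact set, I would extract a further subsequence with $\alpha_n\to\overline{\alpha}$. Next I would verify that $\overline{\alpha}$ is feasible at $\overline{\theta}$: from $d_\mu(\alpha_n,\theta_n)\geq0$ and the continuity of $d_\mu$, passing to the limit yields $d_\mu(\overline{\alpha},\overline{\theta})\geq0$ for every $\mu$, hence $\overline{\alpha}\in\mathcal{S}_{\alpha,\overline{\theta}}$; this is exactly the closed-graph property (b). Continuity of $J$ then gives $L=\lim_n J(\theta_n,\alpha_n)=J(\overline{\theta},\overline{\alpha})\leq\max_{\alpha\in\mathcal{S}_{\alpha,\overline{\theta}}}J(\overline{\theta},\alpha)=\widetilde{J}(\overline{\theta})$, which is the desired inequality. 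In the language of set-valued analysis, this is the u.s.c. half of Berge's maximum theorem: the feasible correspondence is upper hemicontinuous with compact values (a consequence of its closed graph together with local boundedness), and $J$ is continuous.

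The step I expect to be the main obstacle is ingredient (c), the uniform boundedness of the feasible correspondence over $\mathcal{R}_\theta$: only this guarantees both attainment of the inner maximum and the existence of a convergent subsequence $\alpha_n\to\overline{\alpha}$ with a feasible limit. For the two estimation problems of Sections \ref{cond_sec} and \ref{proj_sec} this is automatic, since there $\mathcal{S}_{\alpha,\theta}$ is bounded by the noise bounds (and, in the conditional case, by the bounded prior set $\mathcal{P}_{\theta_\nu}$); in the general formulation it is the compactness assumption already built into the relaxation scheme. Everything else — closedness of the graph and continuity of $J$ — is immediate from the fact that $J$ and the $d_\mu$ are polynomials, so no delicate estimates are required.
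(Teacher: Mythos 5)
Your proof is correct and follows essentially the same route as the paper's: select maximizers $\alpha_n$ along a sequence $\theta_n\to\overline{\theta}$ attaining the limsup, use compactness to extract a convergent subsequence $(\theta_{n_\ell},\alpha_{n_\ell})\to(\overline{\theta},\overline{\alpha})$, check feasibility of the limit, and conclude by continuity of $J$. If anything, you are slightly more explicit than the paper, whose notation $\mathcal{S}_{\alpha,\theta}$ obscures the dependence of the feasible set on $\theta$ and thereby glosses over the closed-graph step ($d_\mu(\alpha_n,\theta_n)\geq 0$ implies $d_\mu(\overline{\alpha},\overline{\theta})\geq 0$) that you spell out to justify $\overline{\alpha}\in\mathcal{S}_{\alpha,\overline{\theta}}$.
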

\begin{proof}
Let $(\theta_n)\subset\mathcal{R}_\theta$ be a sequence such that $\theta_n\to\theta$ as $n\to\infty$, and
\[\limsup_{z\to\theta} \widetilde{J}(z)\,=\,\lim_{n\to\infty} \widetilde{J}(\theta_{n}).\]
Next, for each $n\in\mathbb{N}$, let $\alpha^*(\theta_{n})$ be an arbitrary maximizer in $\mathcal{S}_{\alpha,\theta}$ for the max problem in (\ref{maxproblem}).
By compactness of $\mathcal{R}_\theta$ and $\mathcal{S}_{\alpha,\theta}$, there is a subsequence denoted $(n_\ell)$, $\ell\in\mathbb{N}$,
and a point $(\alpha,\theta)\in\mathcal{S}_{\alpha,\theta}\times\mathcal{R}_\theta$ such that $(\alpha^*(\theta_{n_\ell}),\theta_{n_\ell})\to(\alpha,\theta)$ as $\ell\to\infty$.
Consequently, using continuity of $J$,
\[\limsup_{z\to\theta} \widetilde{J}(z)\,=\,\lim_{n\to\infty} \widetilde{J}(\theta_{n})\,=\,
\lim_{\ell\to\infty}J(\theta_{n_\ell},\alpha^*(\theta_{n_\ell}))\,=\,J(\theta,\alpha)\,\leq \,\max_{\alpha\in\mathcal{S}_{\alpha,\theta}}\,J(\theta,\alpha)\,=\,\widetilde{J}(\theta),\]
which proves that $\widetilde{J}$ is u.s.c.
\end{proof}
Thanks to Theorem \ref{approxL1} we are in the position of proving the following result, which shows that the solution to problem $\textbf{P3}$ converges to the solution of $\textbf{P2}$ (and hence problem $\textbf{P1}$) as $\tau$ goes to infinity.

\begin{theorem}\label{add1}
Let $\widetilde{J}^*_\tau(\theta)$, $\tau\in\mathbb{N}$, be the polynomial defined in Theorem \ref{approxL1}. Consider the polynomial optimization problem
$\textbf{P3}$ in (\ref{gen_minmax_ver3}) with optimal value denoted by $J^*_\tau$, and let $\theta^*_\tau\in\mathcal{M}$
be an optimal solution of $\textbf{P3}$.  Let $\widehat{J}_\tau\,=\,\min_{k\leq \tau}J^*_k\,=\,\widetilde{J}^*_{k(\tau)}(\theta^*_{k(\tau)})$ for some $k(\tau)\in[1,\ldots,\tau]$.

Then:
\begin{equation}
\label{result1}
\lim_{\tau\to\infty}\left( \min_{k\leq \tau}\:J^*_k\right)\,=\,
\lim_{\tau\to\infty}\widehat{J}_\tau\,=\,
\min_{\theta\in\mathcal{M}}\,\max_{\alpha\in \mathcal{S}_{\alpha,\theta}}\,J(\theta,\alpha)\,=:\,J^*.
\end{equation}
Moreover, if $\widetilde{J}(\theta)$ is continuous on $\mathcal{M}$ and
$\theta_{\mbox{\tiny rob}}$ in (\ref{gen_minmax_ver2}) is unique, then
$\theta^*_{k(\tau)}\to\theta_{\mbox{\tiny rob}}$ as $\tau\to\infty$. If $\theta_{\mbox{\tiny rob}}$
is not unique then any accumulation point of the sequence $(\theta^*_{k(\tau)})$, $\tau\in\N$, is a global minimizer
of problem $\min\{\tilde{J}(\theta):\theta\in\mathcal{M}\}$.
\end{theorem}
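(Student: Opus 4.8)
The plan is to prove the two assertions separately, in both cases leaning on the two facts already established: the pointwise upper bound $\widetilde{J}^*_\tau(\theta)\ge\widetilde{J}(\theta)$ valid for \emph{every} $\theta\in\mathcal{R}_\theta$ (this is immediate from the certificate in \eqref{SOS_relax}, which forces $\widetilde{J}^*_\tau(\theta)\ge J(\theta,\alpha)$ for all admissible $\alpha$ and hence $\widetilde{J}^*_\tau(\theta)\ge\max_\alpha J(\theta,\alpha)=\widetilde{J}(\theta)$), and the $L_1(\mathcal{R}_\theta,\varphi)$ convergence of Theorem \ref{approxL1}. Throughout I use that $\mathcal{M}$ is compact, since $\mathcal{M}\subseteq\mathcal{R}_\theta$ and $\mathcal{R}_\theta$ is a closed box, and that $\widetilde{J}$ is bounded on $\mathcal{R}_\theta$.

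First I would settle the easy inequality. The upper bound $\widetilde{J}^*_\tau\ge\widetilde{J}$ on $\mathcal{M}$ gives $J^*_\tau=\min_{\theta\in\mathcal{M}}\widetilde{J}^*_\tau(\theta)\ge\min_{\theta\in\mathcal{M}}\widetilde{J}(\theta)=J^*$ for every $\tau$. Since $\widehat{J}_\tau=\min_{k\le\tau}J^*_k$ is nonincreasing in $\tau$ and bounded below by $J^*$, the limit $\lim_{\tau\to\infty}\widehat{J}_\tau$ exists and is $\ge J^*$; it remains to prove $\lim_\tau\widehat{J}_\tau\le J^*$, i.e. that $J^*_\tau$ comes arbitrarily close to $J^*$ along \emph{some} subsequence. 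For this I would convert the $L_1$ convergence into pointwise information. Fix $\epsilon>0$ and pick $\theta_0\in\mathcal{M}$ with $\widetilde{J}(\theta_0)<J^*+\epsilon$. By Proposition \ref{usc} the function $\widetilde{J}$ is u.s.c., so there is a relatively open neighbourhood $V$ of $\theta_0$ in $\mathcal{R}_\theta$ on which $\widetilde{J}<J^*+\epsilon$, and $V$ has $\varphi(V)>0$. Since $\int_{\mathcal{R}_\theta}|\widetilde{J}^*_\tau-\widetilde{J}|\,d\varphi\to0$, after passing to a subsequence $(\tau_j)$ I may assume $\widetilde{J}^*_{\tau_j}\to\widetilde{J}$ $\varphi$-almost everywhere on $\mathcal{R}_\theta$, hence at $\varphi$-a.e. point of $V$. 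Selecting such a point $\theta_1\in V\cap\mathcal{M}$ gives $\widetilde{J}^*_{\tau_j}(\theta_1)\to\widetilde{J}(\theta_1)<J^*+\epsilon$, so $J^*_{\tau_j}\le\widetilde{J}^*_{\tau_j}(\theta_1)<J^*+\epsilon$ for large $j$; thus $\widehat{J}_\tau<J^*+\epsilon$ eventually, and letting $\epsilon\downarrow0$ yields \eqref{result1}.

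For the convergence of minimizers I would argue by compactness. Let $(\theta^*_{k(\tau)})\subset\mathcal{M}$ attain $\widehat{J}_\tau=\widetilde{J}^*_{k(\tau)}(\theta^*_{k(\tau)})$ and let $\bar\theta\in\mathcal{M}$ be any accumulation point, say $\theta^*_{k(\tau_j)}\to\bar\theta$. Combining $\widehat{J}_{\tau_j}\to J^*$ from \eqref{result1} with the upper bound $\widehat{J}_{\tau_j}=\widetilde{J}^*_{k(\tau_j)}(\theta^*_{k(\tau_j)})\ge\widetilde{J}(\theta^*_{k(\tau_j)})$ gives $\liminf_j\widetilde{J}(\theta^*_{k(\tau_j)})\le J^*$; when $\widetilde{J}$ is continuous on $\mathcal{M}$ the left-hand side equals $\widetilde{J}(\bar\theta)$, so $\widetilde{J}(\bar\theta)\le J^*$, and since $\widetilde{J}(\bar\theta)\ge J^*$ trivially we obtain $\widetilde{J}(\bar\theta)=J^*$, i.e. $\bar\theta$ is a global minimizer of $\widetilde{J}$ on $\mathcal{M}$. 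If $\theta_{\mbox{\tiny rob}}$ is unique, every accumulation point equals it, and a bounded sequence all of whose accumulation points coincide converges, so $\theta^*_{k(\tau)}\to\theta_{\mbox{\tiny rob}}$; if it is not unique, the same computation shows every accumulation point is a global minimizer.

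I expect the reverse inequality to be the main obstacle, precisely the step that turns the $L_1(\mathcal{R}_\theta,\varphi)$ estimate, which controls $\widetilde{J}^*_\tau$ only in an averaged sense over the auxiliary box, into a genuine near-optimal value \emph{attained on} the constraint set $\mathcal{M}$. The delicate point is that $\mathcal{M}$ may be thin (for instance a lower-dimensional manifold) and hence $\varphi$-negligible, in which case almost-everywhere convergence on $\mathcal{R}_\theta$ conveys nothing at points of $\mathcal{M}$, so the selection of $\theta_1\in V\cap\mathcal{M}$ above needs justification; the everywhere-valid bound $\widetilde{J}^*_\tau\ge\widetilde{J}$ together with the upper semicontinuity from Proposition \ref{usc} is exactly what must be exploited to bridge this gap, and the role of the running minimum $\widehat{J}_\tau$ is to let one conclude from a single favourable subsequence rather than from convergence at every $\tau$.
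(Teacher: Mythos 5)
Your argument is correct and follows essentially the same route as the paper's own proof: the pointwise bound $\widetilde{J}^*_\tau\ge\widetilde{J}$ together with monotonicity of $\widehat{J}_\tau$ gives $\lim_\tau\widehat{J}_\tau\ge J^*$; Proposition \ref{usc} supplies an open set of $\epsilon$-near-minimizers; the $L_1(\mathcal{R}_\theta,\varphi)$ convergence of Theorem \ref{approxL1} is upgraded along a subsequence (the paper invokes almost-uniform convergence via Ash's theorem where you use a.e.\ convergence of a subsequence --- interchangeable here by Egorov's theorem); and evaluating at a favourable point of $\mathcal{M}$ yields $J^*_{\tau_j}\le J^*+\epsilon$ eventually. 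Your compactness-plus-squeeze treatment of the minimizers, including the use of continuity of $\widetilde{J}$ and the uniqueness/non-uniqueness dichotomy, is also identical to the paper's.

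Concerning the delicate point you flag at the end: your worry is legitimate, but the paper's proof suffers from it in exactly the same way, so it does not make your attempt weaker than the published argument. The paper defines $\mathbf{B}(\epsilon)=\{\theta\in\mathcal{M}:\widetilde{J}(\theta)<J^*+\epsilon\}$ and asserts that upper semicontinuity makes it open with $\kappa=\varphi(\mathbf{B}(\epsilon))>0$; in fact $\mathbf{B}(\epsilon)$ is only relatively open in $\mathcal{M}$, and if $\mathcal{M}$ is $\varphi$-negligible (e.g.\ a lower-dimensional set) then $\kappa=0$ and the subsequent step --- choosing $A_\kappa$ with $\varphi(A_\kappa)<\kappa$ so that $(\mathcal{R}_\theta\setminus A_\kappa)\cap\mathbf{B}(\epsilon)\ne\emptyset$ --- collapses, precisely as your selection of $\theta_1\in V\cap\mathcal{M}$ collapses. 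Both proofs therefore rest on the implicit assumption that nonempty relatively open pieces of $\mathcal{M}$ carry positive $\varphi$-measure (for instance $\mathcal{M}$ equal to the closure of its interior); under that reading your proof is complete and coincides with the paper's.
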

\begin{proof}
Observe that being $\widetilde{J}^*_\tau(\theta)$ continuous on $\mathcal{R}_\theta$ (hence on $\mathcal{M}$),
it  has a global minimizer $\theta^*_\tau\in\mathcal{M}$, for every $\tau$.
From Theorem \ref{approxL1}, $\widetilde{J}^*_\tau(\theta)\stackrel{L_1(\mathcal{R}_\theta,\varphi)}{\to} \widetilde{J}(\theta)$
(i.e., convergence in the $L_1(\mathcal{R}_\theta,\varphi)$-norm).
Hence, by \cite[Theorem 2.5.3]{ash}, there exists a subsequence
$(\tau_\ell)$ such that $\widetilde{J}^*_{\tau_\ell}(\theta)\to \widetilde{J}(\theta)$, $\varphi$-almost uniformly on $\mathcal{R}_\theta$.

Next, by Proposition \ref{usc}, the optimal value mapping $\widetilde{J}$ is u.s.c. on $\mathcal{R}_\theta$
(hence on $\mathcal{M}$).
With $\epsilon>0$ fixed, arbitrary, let $\mathbf{B}(\epsilon)\doteq\{\theta\in\mathcal{M}: \widetilde{J}(\theta)<J^*+\epsilon\}$ and
let $\kappa\doteq\varphi(\mathbf{B}(\epsilon))$. As $\widetilde{J}$ is u.s.c.,
$\mathbf{B}(\epsilon)$ is nonempty, open, and therefore $\kappa>0$.
As $\widetilde{J}^*_{\tau_\ell}(\theta)\to \widetilde{J}(\theta)$, $\varphi$-almost uniformly on $\mathcal{R}_\theta$, there exists a Borel set $A_\kappa\in\mathcal{B}(\mathcal{R}_\theta)$ such that
$\varphi(A_\kappa)<\kappa$ and $\widetilde{J}^*_{\tau_\ell}(\theta)\to \widetilde{J}(\theta)$, uniformly on $\mathcal{R}_\theta\setminus A_\kappa$.
Hence, as $\Delta\doteq(\mathcal{R}_\theta\setminus A_\kappa)\cap \mathbf{B}(\epsilon)\neq\emptyset$, one has
\[\lim_{\ell\to\infty}\widetilde{J}^*_{\tau_\ell}(\theta)\,=\, \widetilde{J}(\theta)\,\leq\,J^*+\epsilon,\qquad\forall \,\theta\in\Delta,\]
and so, as $J^*_\tau\leq \widetilde{J}^*_\tau(\theta)$ on $\Delta$, one obtains $\displaystyle\lim_{\ell\to\infty}J^*_{\tau_\ell}
\leq J^*+\epsilon$. As $\epsilon>0$ was arbitrary, one finally gets
 $\displaystyle\lim_{\ell\to\infty}J^*_{\tau_\ell}=J^*$.
 On the other hand, by monotonicity of the sequence $(\widehat{J}_\tau)$,
\begin{equation}
\label{auxiliary1}
J^*\leq \displaystyle\lim_{\tau\to\infty}\widehat{J}_\tau\,=\,\displaystyle\lim_{\ell\to\infty}
\widehat{J}_{\tau_\ell}\,\leq\,\displaystyle\lim_{\ell\to\infty}J^*_{\tau_\ell}\,=\,J^*,
\end{equation}
and so (\ref{result1}) holds.

Next, let $\theta^*_\tau\in\mathcal{M}$ be a global minimizer of $\widetilde{J}^*_\tau(\theta)$ on $\mathcal{M}$.
As $\mathcal{M}$ is compact, there exists $\overline{\theta}\in\mathcal{M}$ and a subsequence $\tau_\ell$
such that $\theta^*_{k(\tau_\ell)}\to\overline{\theta}$ as $\ell\to\infty$. In addition,
from $\widetilde{J}^*_\tau(\theta)\geq \widetilde{J}(\theta)$ for every $\tau$ and every $\theta\in\mathcal{R}_\theta$,
\[J^*\,\leq\,\widetilde{J}(\theta^*_{k(\tau_\ell)})\,\leq\, \widetilde{J}^*_{k(\tau_\ell)}(\theta^*_{k(\tau_\ell)})\,=\,
\widehat{J}_{\tau_\ell}.\]
So using (\ref{auxiliary1}) and letting $\ell\to\infty$ yields the desired result
$J^*=\widetilde{J}(\overline{\theta})$. So if the minimizer of $\widetilde{J}$ on $\mathcal{M}$ is unique,
one has $\overline{\theta}=\theta_{\mbox{\tiny rob}}$, and as the converging subsequence $(\tau_\ell)$ was
arbitrary, the desired result follows. If the minimizer is not unique then every accumulation point
$\overline{\theta}$ is a global minimizer since $J^*=\widetilde{J}(\overline{\theta})$, as just shown above.
\end{proof}
\begin{remark}
Even though the sequence $\tilde{J}^*_\tau\to \tilde{J}$ for the $L_1$-norm, the sequence $\tilde{J}^*_\tau$, $\tau\in\N$, is  not necessarily monotone
(meaning $\tilde{J}^*_\tau(\theta)\geq\tilde{J}^*_{\tau+1}(\theta)$ for all $\tau\in\N$, $\theta\in\mathcal{R}_\theta$, does not necessarily holds). For this reason, it would be useful to know bounds on the distance between $\tilde{J}^*_\tau$ and $\tilde{J}$ for each fixed value of $\tau$. Unfortunately, computation of such bounds is a difficult open problem that requires further investigation.
\end{remark}
\subsection{Solution to problem \textbf{P3} via SDP relaxation}\label{pol_relax}
Once a polynomial approximation $\widetilde{J}^*_\tau(\theta)$ of the optimal value function $\widetilde{J}(\theta)$ of the inner maximization problem in \eqref{gen_minmax} has been computed as discussed in \ref{pol_approx}, we are in the position of solving problem \textbf{P3} which is a multivariate polynomial optimization problem in the variable $\theta$ on the compact semi-algebraic set $\mathcal{M}$. By applying the moments-based relaxation approach proposed in \cite{2001Ala}, a hierarchy of SDP relaxations $(\mathbf{Q}_t)$, $t\in\mathbb{N}$, can be constructed with the following properties:

- The resulting sequence $(\inf\mathbf{Q}_t)$, $t\in\mathbb{N}$, of optimal values is monotone non decreasing and converges to the optimal value $J^*_\tau$ of problem
$\textbf{P3}$.

- If the global minimizer $\theta_\tau\in\mathcal{M}$ of $\textbf{P3}$ is unique then the vector of ``first-order" moments of an optimal solution
$\mathbf{y}$ of $\mathbf{Q}_t$ converges to $\theta_\tau$ as $t\to\infty$.

For more details on SDP relaxations for generalized moment problems, the interested reader is referred, e.g., to \cite{2010Lasbook}.
In fact, in view of recent results in \cite{marshall} and \cite{nie}, the convergence is finite provided that the problem satisfies a set of mild conditions (see \cite{nie} and the references therein for details), that is, \textit{generically}, the optimal value $J^*_\tau$ is attained at a particular relaxation in the hierarchy, i.e., $J^*_\tau=\inf\mathbf{Q}_t$ for some $t$. Finally, another recent result
by Nie \cite{nie-cert} ensures that \textit{generically}, eventually some rank test is passed at some step $t$ in the hierarchy,
which permits to detect finite convergence at step $t$, and extract global minimizers (which \textit{generically} are finitely many). The reader is referred to the papers \cite{nie,nie-cert} for a discussion on the precise technical meaning of the word \textit{generically} in this context.
\\

\begin{remark}\label{exp:sp}\textbf{[Exploiting sparsity]}
At a first sight, the applicability of the relaxation-based procedure proposed in this paper seems to be limited in practice to small-size identification problems, due to large dimensions of the SDP problems involved in the two stages of the proposed approach. That is certainly true for problem \textbf{P1} in its general form \eqref{gen_minmax}, where the functional $J(\theta,\alpha)$ is a generic multivariate polynomial and the sets $\mathcal{M}$ and $\mathcal{S}_{\alpha,\theta}$ are generic semialgebraic sets. However, in view of the discussion and results reported in works \cite{cepire2012a,cepire2012b}, it is possible to show that a number of identification problems arising from real-word applications enjoy a peculiar sparsity structure, called \textit{correlative sparsity} in the framework of large-scale optimization (see, e.g., \cite{2006Awakikomu,2008kokiko}), which can be exploited to significantly reduce the computational complexity and the size of the involved SDP optimization problems either by means of the approach proposed in \cite{2006Awakikomu,2006Ala} or by means of the ad-hoc procedure presented in \cite{cepire2011a}. More specifically, it is possible to show that a number of set-membership identification problems leads to semialgebraic optimization where the constraints and the functional satisfy the so-called \textit{running intersection property} (see \cite{2006Ala}), a condition that guarantees convergence of the solution of the relaxed problem to the global optimum of the polynomial problem also when the correlative sparsity pattern is used to derive semidefinite relaxations of reduced complexity (see, e.g., \cite{2006Awakikomu,2008kokiko}). Analysis of the correlative sparsity structure of problem \textbf{P1} cannot be performed in general, since it requires to precisely specify the mathematical structure of the sets $\mathcal{M}$ and $\mathcal{S}_{\alpha,\theta}$. At the same time, providing a general discussion on the subject of sparsity exploitation in the context of SDP relaxation for polynomial problems is far beyond the scope of the paper, and the interested reader is referred to papers \cite{2006Awakikomu,2006Ala,2008kokiko}. However, we will try to provide here a sketch of the main ideas, covering the subject mostly at the level of intuition. \\ Let $\{1, \ldots n\}$  be the union $\bigcup_{k=1}^p I_k$ of subsets $I_k \subset \{1, \ldots n\}$. A polynomial optimization problem is said to enjoy a correlative type of sparsity structure if: (i) each polynomial involved in the description of the set of constraints is only concerned with variables $\{X_i: i \in I_k\}$ for some $k$; (ii) the functional to be optimized $J$ can be written as the sum $J=J_1+\ldots+J_p$ such that each $J_k$ only involves variables $\{X_i: i \in I_k\}$. Furthermore, the problem satisfies the running intersection property if the following condition is fulfilled:
$$
I_{k+1} \cap \bigcup_{j=1}^k I_j \subseteq I_s,\mbox{for some}\ s \leq k
$$
\end{remark}
The subsets $\{I_k\}$ can be detected either by inspection or by exploiting the systematic approach proposed in \cite{2006Awakikomu} and implemented in the software package \cite{2008Awakikimu}. If the problem enjoys a correlative sparsity structure, this can be used to derive SDP relaxations of lower complexity, as described in \cite{2006Awakikomu,2006Ala}. Essentially, the intuitive idea underlying the approaches proposed in \cite{2006Awakikomu,2006Ala}  is the following:  if the constraints and the objective function can be properly decomposed in subsets/subfunctionals depending only on a small subset of  variables, then ``sparse'' SDP relaxations can be constructed. This means that the involved SOS polynomials depend, each one, only on a small subset of variables  of the original polynomial optimization problem. The fact that the linear EIV identification and the nonlinear Hammerstein identification problems enjoy a correlative sparsity structure satisfying the running intersection property, has been proved in previous papers \cite{cepire2012a,cepire2012b}. The same arguments/reasoning can be used to show that the polynomial approximation/optimization problems obtained by applying the approach proposed in this paper to the problem of conditional central estimation problem \eqref{ccest}, enjoy the correlative sparsity structure and satisfy the running intersection property for many different choices of the set $\mathcal{M}$ including, e.g., the case $\mathcal{M}=\mathcal{D}$. This is also true for a number of problems in the class of robust conditional projection estimators including the nonlinear nonconvex robust least squares problem considered in Example 2 of Section IV.

\begin{remark}
It is worth-remarking that, by exploiting recent results presented in \cite{2010lapu}, the two-stage relaxation-based procedure proposed in this section can be extended to a more general class of problems where the function $J(\theta,\alpha)$ in problem $\textbf{P1}$ is a non-polynomial semialgebraic function.
\end{remark}


\section{Simulation examples}

The capabilities of the presented approach are shown in this section by means of two simulation examples.\\

\noindent \emph{Example 1}

The first illustrative example comes from the problem addressed in \cite{CaGaVi2011} on the identification of ARX models
based on quantized measurements. Consider the system analyzed in \cite{CaGaVi2011}, i.e.,
\begin{equation} \label{eqn:exsystem}
\mathbf{w}(t)=\theta_1^{\mathrm{o}}\mathbf{w}(t-1)+\theta_2^{\mathrm{o}}u(t)+d(t)=0.6\mathbf{w}(t-1)+0.6u(t)+d(t),
\end{equation}
where $u(t)$ and $\mathbf{w}(t)$ are the input and output signals at time $t$, respectively, and  $d(t)$ is an unknown additive disturbance which is assumed to belong to the interval $[-0.1, \ 0.1]$. The system is simulated using a white input signal $u(t)$  uniformly distributed within $[-2.5, \ 2.5]$ and a disturbance $d(t)$ with uniform distribution in the interval $[-0.1, \ 0.1]$. The output $\mathbf{w}(t)$ is measured by a  binary sensor with threshold $C=1$, i.e.,
\begin{equation}
\mathbf{y}(t)=\left\{\begin{array}{ccc}
             1 & \mbox{ if } \mathbf{w}(t) \geq 1 \\
             0 & \mbox{ otherwise }
           \end{array}
\right.
\end{equation}
where $\mathbf{y}(t)$ is the output of the binary sensor. Indeed, the system output $\mathbf{w}(t)$ is not accessible and only its measurement $\mathbf{y}(t)$ is available.
The estimate of the parameters $\theta$ of the system in  \eqref{eqn:exsystem} is computed based on a collection of $N=200$ input/output measurements.
 Note that $\mathbf{y}(t)$ can be written in the form of \eqref{output} as follows:
\begin{equation} \label{eqn:ywep}
\mathbf{y}(t)=\mathbf{w}(t)+\varepsilon_y(t),
\end{equation}
with $\varepsilon_y(t)$ s.t.
\begin{equation} \label{eqn:exnoise}
\begin{array}{ll}
  \varepsilon_y(t) \leq 0 & \mbox{ if } \mathbf{y}(t)=1, \\
  \varepsilon_y(t) \geq -1 & \mbox{ if } \mathbf{y}(t)=0.
\end{array}
\end{equation}
Based on eqs. \eqref{eqn:exnoise}, the uncertainty set $\mathcal{S}_{\varepsilon_y}$      can be written in terms of nonnegative inequality constraints as
\begin{equation}\label{exnoise}
\mathcal{S}_{\varepsilon_y} = \left\{\varepsilon_y \in \mathbb{R}^N: h_t(\varepsilon_y) \geq 0,\  t = 1,\ldots,N \right\},
\end{equation}
with
\begin{equation} \label{eqn:exnoiseconstraints}
 h_t(\varepsilon_y)\doteq  \left\{\begin{array}{ll}
 - \varepsilon_y(t) & \mbox{ if } y(t)=1, \\
  \varepsilon_y(t) + 1 & \mbox{ if } y(t)=0.
\end{array} \right.
\end{equation}
Substitution of eq.  \eqref{eqn:ywep} into \eqref{eqn:exsystem} leads to the following relation between input and noise-corrupted output $\mathbf{y}(t)$:
\begin{equation} \label{inpuoutrel}
\mathbf{y}(t)=\theta_1\left(\mathbf{y}(t-1)- \varepsilon_y(t-1) \right)+\theta_2u(t)+d(t)+\varepsilon_y(t).
\end{equation}
The FPS  $\mathcal{D}_\theta$ for the considered system is thus defined as the projection over the parameter space of the set $\mathcal{D}$ defined by \eqref{inpuoutrel}, \eqref{exnoise} and the a-priori assumption on the disturbance $d(t)$, i.e.
\begin{equation} \label{ex:DD}
\begin{split}
\mathcal{D} =  \left\{(\theta, d, \varepsilon_y) \in \mathbb{R}^{2+2N}:\ \right. & \mathbf{y}(t)=\theta_1\left(\mathbf{y}(t-1)- \varepsilon_y(t-1) \right)+\theta_2u(t)+d(t)+\varepsilon_y(t),  \\
 &  \left. h_t(\varepsilon_y) \geq 0, \ \ -0.1 \leq d(t) \leq 0.1, \; \; t=1,\ldots,N \right\}.
\end{split}
\end{equation}
Note that $\mathcal{D}$ is described by polynomial constraints because of the product between the unknown parameter $\theta_1$ and the noise $\varepsilon_y(t-1)$ in the equality constraint appearing in \eqref{ex:DD}. In this example we will compute the $\ell_2$-norm conditional Chebyshev center $\theta_c^{\mathcal{D}}$ of the FPS $\mathcal{D}$ with respect to  $\mathcal{D}_\theta$ itself, i.e.,
\begin{equation}\label{ex:ccest}
\theta_c^{\mathcal{D}} \doteq \arg \min_{\theta \in \mathcal{D}_\theta} \max_{(\theta_\nu,d,\varepsilon_y) \in \mathcal{D}} \|\theta_\nu - \theta\|_2^2.
\end{equation}
In order to compute a solution to problem \eqref{ex:ccest} through  the procedure discussed in the paper, an outer-bounding box $\mathcal{R}_\theta$ of the FPS $\mathcal{D}_\theta$ is first evaluated by means of the approach proposed in \cite{cepire2012a} for bounding the parameters of linear systems in the bounded-error EIV framework. The computed outer-bounding box $\mathcal{R}_\theta$ is reported in Fig. \ref{F:FPSCC}, together with the true FPS $\mathcal{D}_\theta$. Then, a polynomial  $\widetilde{J}_\tau(\theta)^*$ of degree $2\tau$ (with $\tau=2$) upper approximating the function
\begin{equation} \label{eqn:truepo}
\widetilde{J}(\theta)= \max_{(\theta_\nu,d,\varepsilon_y) \in \mathcal{D}} \|\theta_\nu - \theta\|_2^2,
\end{equation}
is computed by solving the SDP problem \eqref{SOS_relax}. It is worth remarking that problem \eqref{SOS_relax} enjoys a particular structured sparsity which is used to reduce the computational complexity in constructing the SOS polynomials in  \eqref{SOS_relax}. In fact, the objective function $\|\theta_\nu - \theta\|_2^2$ in \eqref{eqn:truepo} only depends on the model parameters $\theta_\nu$, while each constraint defining $\mathcal{D}$ in \eqref{ex:DD} only depends on a small subset of variables, namely, the model parameters $\theta_\nu$, the disturbance $d(t)$ and the noise samples $\varepsilon_y(t-1)$ and $\varepsilon_y(t)$.  A correlative sparsity structure satisfying  conditions in Remark \ref{exp:sp} can be easily detected through a procedure similar to the one discussed in \cite{cepire2012a} in the context of set-membership EIV identification.
\begin{figure}[!h]
\centerline{
\includegraphics[scale=1]{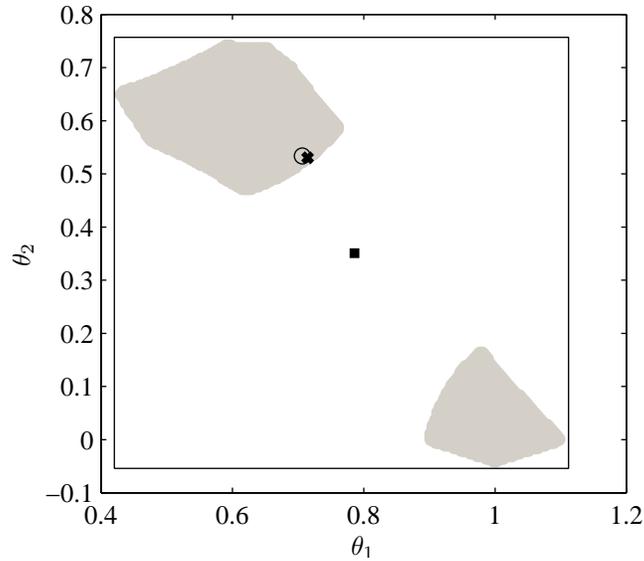}
}\caption{Exact Feasible Parameter Set  $\mathcal{D}_\theta$ (grey region), outer-bounding box $\mathcal{R}_\theta$ (region inside the box), (unconditional) Chebyshev center ({\tiny  $\blacksquare$}), exact conditional Chebyshev center $\theta_c^{\mathcal{D}}$ (\textbf{$\times$}), approximation of the conditional Chebyshev center computed with the proposed two-stage approach (O).}\label{F:FPSCC}
\end{figure}
The obtained 4-degree polynomial $\widetilde{J}_\tau(\theta)^*$ given by
\begin{align}
\widetilde{J}_\tau(\theta)^*= & 0.939-0.795\theta_1-0.037\theta_2-1.039\theta_1^2-3.731\theta_1\theta_2+ 4.315\theta_2^2+ \nonumber\\
                             & 2.617\theta_1^3+0.790\theta_1^2\theta_2+ 2.473\theta_1\theta_2^2-4.567\theta_2^3 + \\
                              & -0.968\theta_1^4+1.740\theta_1^3\theta_2-7.267\theta_1^2\theta_2^2+ 7.917\theta_1\theta_2^3-1.166\theta_2^4, \nonumber
\end{align}
is plotted in Fig. \ref{F:Pol}, together with the true function $\widetilde{J}(\theta)$ in \eqref{eqn:truepo}, which in turn has been obtained by gridding.
\begin{figure}[h!]
\centerline{
\includegraphics[scale=1]{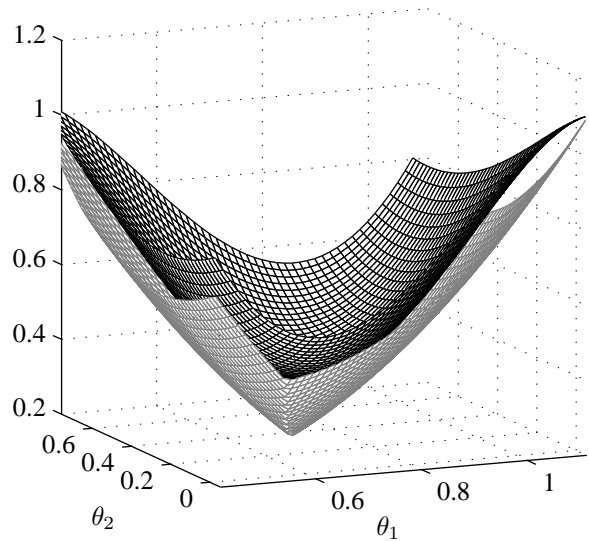}
}\caption{True function $\widetilde{J}(\theta)$ in \eqref{eqn:truepo} (gray) and computed polynomial approximation $\widetilde{J}^*_\tau(\theta)$ (black).}\label{F:Pol}
\end{figure}

The (unconditional) Chebyshev center of the FPS is computed by minimizing $\widetilde{J}_\tau(\theta)^*$ over the outer-bounding box $\mathcal{R}_\theta$, while an approximation of the conditional Chebyshev center is computed by solving problem \textbf{P3} via the SDP relaxation approach discussed in Section \ref{pol_relax}. Both the unconditional and the conditional Chebyshev center are reported in Fig. \ref{F:FPSCC}, which shows that the unconditional one does not belong to the FPS, while the computed approximation of the conditional Chebyshev center does. In the same figure,  the exact conditional Chebyshev center $\theta_c^{\mathcal{D}_\theta}$, that is, the minimum of the true function $\widetilde{J}(\theta)$ over the exact FPS  $\mathcal{D}_\theta$ is also reported showing that the proposed relaxation approach is able to provide a good approximation of the global optimal solution to problem \eqref{ex:ccest}. The CPU time taken to compute the conditional Chebyshev center  $\theta_c^{\mathcal{D}_\theta}$ is about 1320 seconds on a 2.40-GHz Intel Pentium
IV with 3 GB of RAM. More specifically, the time required to compute the polynomial approximation $\widetilde{J}(\theta)$ (i.e. time required to compute the solution to problem $\eqref{SOS_relax}$) is about $450$ seconds, while the second step (solution to minimization problem \textbf{P3} with order of relaxation $2$) takes about $870$ seconds. The maximum amount of memory used by Matlab during the computation was about 891 MB. The solver SeDuMi has been used to solve the SDP problem \eqref{SOS_relax} and the SDP problems relaxing \textbf{P3}.\\

\noindent \emph{Example 2}

In this example, the method is applied to the problem of robust estimation of a non-linear-in-the-parameter static model when both the input and the output measurements are corrupted by bounded noise.

 The  \emph{multi-input-single-output} (MISO) data-generating system is  given by
\begin{align}
\mathbf{w}(t)& =\theta_1^{\mathrm{o}}x_1(t)+\theta_1^{\mathrm{o}}\theta_2^{\mathrm{o}}x_2(t)+\theta_3^{\mathrm{o}}x_3(t)+\left(\theta_1^{\mathrm{o}}\right)^2x_4(t)+\theta_4^{\mathrm{o}}\theta_5^{\mathrm{o}}x_5(t)+\left(\theta_5^{\mathrm{o}}\right)^2x_6(t)+\theta_4^{\mathrm{o}}\theta_6^{\mathrm{o}}x_7(t)=\\
             & = 1x_1(t)+(1\cdot0.6)x_2(t)-0.5x_3(t)+1x_4(t)+0.3\cdot 0.8x_5(t)+\left(0.8\right)^2\!x_6(t)-0.3\cdot 0.5x_7(t),
\end{align}
where $x_i(t)$, with $i=1,\ldots,7$, is the $i$-th noise-free input and $\mathbf{w}(t)$ is the noise-free output at time $t$.
 The inputs $x_i(t)$ are i.i.d. random processes uniformly distributed in the interval $[-1, \ 1]$ with length $N=400$.
  Both the inputs  $x_i(t)$ and  the output $\mathbf{w}(t)$ are corrupted by additive uncertainties $\xi_i(t)$ and $\eta(t)$, respectively, i.e.,
\begin{align}
u_i(t)=& x_i(t)+\xi_i(t), \; \; \; \; \; \; i=1,\ldots,7, \\
\mathbf{y}(t)=& \textbf{w}(t)+\eta(t),
\end{align}
where $\xi_i(t)$ and $\eta(t)$ are white-noise processes uniformly distributed in the intervals $[-\Delta \xi_i, \ \Delta \xi_i]=[-0.2, \ 0.2]$ (for all $i=1,\ldots,7$) and $[-\Delta \eta, \ \Delta \eta]=[-0.25, \ 0.25]$, respectively.  The signal-to-noise ratio on the inputs, $\mathrm{SNR}_{x_i}$, and on output, $\mathrm{SNR}_{\mathbf{w}}$, defined as
\begin{equation}
\mathrm{SNR}_{x_i}=10\log \left\{  \sum_{t=1}^{N}x_i^{2}(t) \right/ \left.
                       \sum_{t=1}^{N}\xi_i^{2}(t) \right\},
\end{equation}
\begin{equation}
\mathrm{SNR}_{\mathbf{w}}=10\log \left\{  \sum_{t=1}^{N}\mathbf{w}^{2}(t) \right/ \left.
                       \sum_{t=1}^{N}\eta^{2}(t) \right\},
\end{equation}
are 13 db (for all $i=1,\ldots,7$) and 16 db, respectively. Let us denote with $\theta = [\theta_1, \ \theta_2, \ \theta_3, \ \theta_4, \ \theta_5, \ \theta_6]$ the parameters of the model to be estimated. The FPS $\mathcal{D}_\theta$ is then given by the projection over the parameter space of the following set:
\begin{equation} \label{ex:DDex2}
\begin{split}
\mathcal{D} =  \left\{(\theta, \xi, \eta) \in \mathbb{R}^{3+5N}:\ \right. & \mathbf{y}(t)=\theta_1\left(x_1(t)-\xi_1(t)\right)+\theta_1\theta_2\left(x_2(t)-\xi_2(t)\right)+\\
& +\theta_3\left(x_3(t)-\xi_3(t)\right)+\theta_1^2\left(x_4(t)-\xi_4(t)\right)+ \\
& +\theta_4^{\mathrm{}}\theta_5^{\mathrm{}}\left(x_5(t)-\xi_5(t)\right)+\left(\theta_5^{\mathrm{}}\right)^2\left(x_6(t)-\xi_6(t)\right)+\theta_4^{\mathrm{}}\theta_6^{\mathrm{}}\left(x_7(t)-\xi_7(t)\right)+\\
& + \eta(t),  \\
 &  \left. |\eta(t)|\leq \Delta\eta, \ \ |\xi_i(t)|\leq \Delta\xi_i, \; \; t=1,\ldots,N, \ \ i=1,\ldots,7 \right\}.
\end{split}
\end{equation}

Now, let $\hat{\mathbf{y}}(t,\theta)$ be the output of the model to be estimated, given by:
  \begin{align}
\hat{\mathbf{y}}(t,\theta)& =\theta_1^{}u_1(t)+\theta_1\theta_2u_2(t)+\theta_3u_3(t)+\theta_1^2u_4(t)+\theta_4^{\mathrm{}}\theta_5^{\mathrm{}}u_5(t)+\left(\theta_5^{\mathrm{}}\right)^2u_6(t)+\theta_4^{\mathrm{}}\theta_6^{\mathrm{}}u_7(t).
\end{align}
In this example, we compute the parameter estimate $\theta^*=[\theta^*_1, \ \theta^*_2, \ \theta^*_3, \ \theta^*_4, \ \theta^*_5, \ \theta^*_6]$ that minimizes the worst-case $\ell_2$-loss function  $\mathcal{V}(\theta,\xi)$, defined as
\begin{equation} \label{eqn:ese2V}
\mathcal{V}(\theta,\xi)=\sum_{t=1}^N \left(\mathbf{y}(t)-\hat{\mathbf{y}}(t,\theta)\right)^2,
\end{equation}
over all possible realizations of the input uncertainties $\xi_i(t)$ in the interval $[-\Delta \xi_i, \ \Delta \xi_i]$ under the constraint that the identified parameters belong to the FPS. The considered estimation problem can be formulated as the following \emph{min-max} optimization problem:
  \begin{equation} \label{eqn:minmaxex2}
  \hat{\theta}^*=\mathrm{arg}\min_{\theta \in \mathcal{D}_{}}\max_{\xi \in \mathcal{S}_{\xi}}\mathcal{V}(\theta,\xi),
  \end{equation}
  where $\mathcal{S}_{\xi}$ is defined as
  \begin{equation} \label{Sxi}
  \begin{split}
  \mathcal{S}_{\xi}=\left\{\xi: |\xi_i(t)|\leq \Delta \xi_i, \ \ \mbox{for all \ } i=1,\ldots,7, \ \ t=1,\ldots,N\right\}.
  \end{split}
  \end{equation}

It is worth noting that problem \eqref{eqn:minmaxex2} is: (i) a nonlinear nonconvex least squares problems, due to the nonlinear-in-parameter structure of the system to be estimated; (ii) a robust nonlinear least-square problem, due to the presence of uncertainty in all the explanatory variables; (iii) a nonconvex constrained least square problem, since the optimal estimate is looked for over the feasible parameter set $\mathcal{D}_\theta$. Therefore, problem \eqref{eqn:minmaxex2} is a challenging estimation problem for which, to the best of the authors' knowledge, no solution has been previously proposed in the literature.

Here, the solution to Problem \eqref{eqn:minmaxex2} is computed by applying the two-stage relaxation based method presented in the paper, which leads to the following estimate of the model parameters:
\begin{equation}
\hat{\theta}^*=\left[\hat{\theta}_1^*,\ \hat{\theta}_2^*,\ \hat{\theta}_3^*, \ \hat{\theta}_4^*, \ \hat{\theta}_5^*, \ \hat{\theta}_6^*\right]=\left[0.98, \ 0.57, \ -0.54, \ 0.36, \ 0.79, \ -0.59\right].
\end{equation}
It is worth remarking that, in order to apply the proposed method, an outer-bounding box of the feasible set  $\mathcal{D}$ has been computed by suitable modifications of the algorithm proposed in \cite{cepire2011a}. Furthermore, as in Example 1,  problem \eqref{eqn:minmaxex2} enjoys a particular sparsity structure which has been  exploited to reduce the computational load in solving \eqref{eqn:minmaxex2}. In fact, the objective function $\mathcal{V}(\theta,\xi)$ is given by the sum of $N$ terms $\left(\mathbf{y}(t)-\hat{\mathbf{y}}(t,\theta)\right)^2$, each one involving only the model parameters $\theta$ and the noise samples $\xi_i(t)$ (with $i=1,\ldots,7$) as unknown variables. Furthermore, each constraint defining  $\mathcal{S}_{\xi}$ in \eqref{Sxi} only depends on the noise variable $\xi_i(t)$. Similarly, each constraint defining the set $\mathcal{D}$ in \eqref{ex:DDex2} only involves a small subset of optimization variables, namely: the model parameters $\theta$, the input noise samples $\xi_i(t)$ (with $i=1,\ldots,7$) and the output noise sample $\eta(t)$. By stacking the  variables $\theta,\xi_i(t),\eta(t)$ in the vector
\begin{equation}
X=\left[\theta_1, \ \ldots, \ \theta_6, \ \xi_1(1), \ldots, \ \xi_7(1), \ldots, \ \xi_1(N), \ldots, \ \xi_7(N), \ \eta(1), \ \ldots, \ \eta(N) \right].
\end{equation}
The index sets $I_t$ (with $t=1,\ldots,N$) introduced in Remark \ref{exp:sp} and satisfying the \emph{running intersection property} can be defined as
\begin{equation}
I_t=\left\{1,\ldots,6,6+7(t-1)+1,\ldots,6+7(t-1)+7,6+7N+t\right\}.
\end{equation}
In this way, each constraint defining $\mathcal{D}$ in \eqref{ex:DDex2} is only concerned with  variables $\left\{X_i: i \in I_t\right\}$, that is  $\theta$,  $\xi_i(t)$ (with $i=1,\ldots,7$) and  $\eta(t)$.

The performance of the estimated
 model is tested on a validation set with $N_{\mathrm{val}}=100$ input/output measurements. The noise-free output  $\mathbf{w}(t)$ and the estimated output signal $\hat{\mathbf{y}}(t,\hat{\theta}^*)$   are plotted in Fig. \ref{Fese2:output}, while  the difference between $\mathbf{w}(t)$ and  $\hat{\mathbf{y}}(t,\hat{\theta}^*)$ is depicted in Fig. \ref{Fese2:err} showing a good agreement between the two signals. The CPU time taken to compute the parameter estimate $\hat{\theta}^*$ is about $7$ hours. More specifically, the time required to compute the solution to problem $\eqref{SOS_relax}$ with $\tau=2$ is about $2.5$ hours, while the second step (solution to minimization problem \textbf{P3} with order of relaxation $2$) takes about $4.5$ hours.  The maximum amount of memory used by Matlab during the computation was about 1.9 GB. Based on the authors' experience, although sparsity is exploited, the identification problem considered in this example becomes computationally intractable  (in commercial workstations and using general purpose SDP solvers like SeDuMi) when models with more that 7 parameters are considered.
 \begin{figure}[h!]
\centerline{
\includegraphics[scale=1]{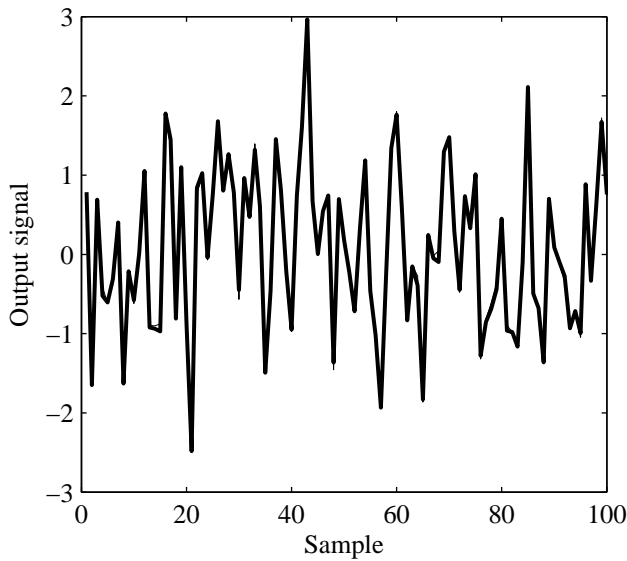}
}\caption{Noise-free output signal $\mathbf{w}(t)$ (thick line) and estimated output $\hat{\mathbf{y}}(t,\hat{\theta}^*)$ (thin line).}\label{Fese2:output}
\end{figure}
\\
\\
 \begin{figure}[h!]
\centerline{
\includegraphics[scale=1]{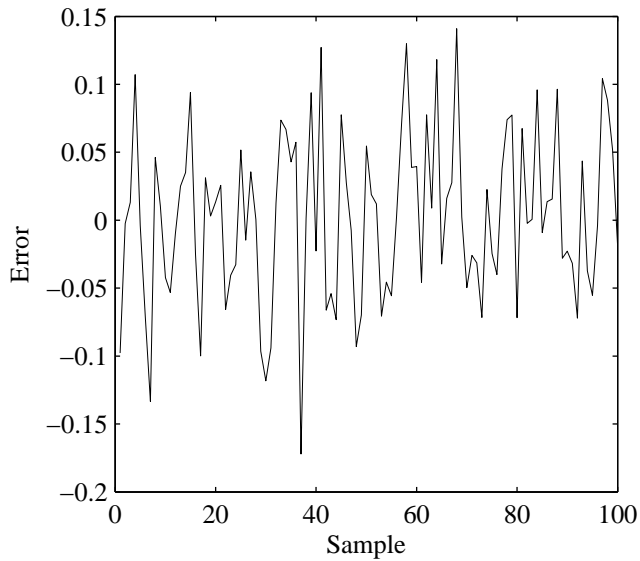}
}\caption{Estimate output error  $\mathbf{w}(t)-\hat{\mathbf{y}}(t,\hat{\theta}^*)$.}\label{Fese2:err}
\end{figure}

\section{CONCLUSIONS}
In this work we have presented a two-stage approach, based on suitable convex semidefinite relaxations, for approximating the global optimal solution to a general class of min-max constrained semialgebraic optimization problems arising in the framework of set-membership estimation theory. We have shown that the proposed methodology can be profitably applied to the problem of computing both \textit{conditional central} and \textit{robust projection} estimators in a nonlinear setting where the operator relating the data and the parameter to be estimated is assumed to be a generic multivariate polynomial function and the uncertainties affecting the data are assumed to belong to semialgebraic sets.
The key idea of the approach is to first compute a convergent polynomial approximation of the optimal value function of the inner maximization problem. Once such an approximation has been computed, the outer minimization problem reduces to a standard polynomial optimization problem solved by constructing a convergent hierarchy of semidefinite relaxations. Two simulation examples have been reported to show the effectiveness of the proposed approach. In particular, in the first example we have demonstrated that the presented two-stage algorithm provides good approximation of the global optimum of the considered min-max estimation problems, while in the second example we have shown that the proposed methodology can be applied to compute the solution to a challenging nonconvex constrained robust least square estimation problem.

\appendix

\subsection*{Proof of Lemma \ref{lemma-existence}.}
To prove that the semidefinite program (\ref{SOS_relax}) has an optimal solution, we
prove that Slater's condition holds for its dual, which is the semidefinite program:
\begin{equation}
\label{SOS_dual}
\begin{array}{ll}
\displaystyle\max_{\mathbf{z}}&L_\z(J(\theta,\alpha))\\
&\mathbf{M}_\tau(\z)\succeq0\\
&\M_{\tau-r_\mu}(d_\mu\,\z)\,\succeq\,0,\qquad \mu=1,\ldots,M\\
&\M_{\tau-1}(\phi_k\,\z)\,\succeq\,0,\qquad k=1,\ldots,\ell\\
&L_\z(\theta^\beta)\,=\,\gamma_\beta,\qquad\forall\beta\in\N^\ell_{2\tau}
\end{array}
\end{equation}
where
\begin{itemize}
\item $\z=(z_\kappa)$, $\kappa\in\N^{\ell+T}_{2\tau}$, is a sequence indexed in the canonical basis of monomials
$(\theta^\beta\alpha^\nu)$, of $\R[\theta,\alpha]_{2\tau}$ (the vector space of polynomials of degree at most $2\tau$).
\item $\M_\tau(\z)$ is the moment matrix of order $\tau$, associated with the sequence $\z$.
\item $\M_{\tau-r_\mu}(d_\mu\,\z)$ is the localizing matrix of order $\tau-r_\mu$, associated with the sequence $\z$
and the polynomial $d_\mu\in\R[\theta,\alpha]$ (and where $r_\mu\doteq\lceil({\rm deg}\,d_\mu)/2\rceil$).
\item $L_\z:\R[\theta,\alpha]\to\R$ is the so-called Riesz functional:
\[p\,\left(=\sum_{(\beta,\nu)\in\N^{\ell+T}}p_{\beta\nu}\,\theta^\beta\,\alpha^\nu\right)\quad \mapsto\,L_\z(p)\,=\,
\sum_{(\beta,\nu)\in\N^{\ell+T}}p_{\beta\nu}\,z_{\beta,\nu},\qquad p\in\R[\theta,\alpha].\]
\end{itemize}
For more details on moment and localizing matrices, and how they are used in polynomial optimization, the interested reader is referred to \cite{2010Lasbook}. Now let $O$ be an open set contained in $\mathcal{R}_\theta\times \mathcal{S}_{\alpha,\theta}$,
with projection $O_1$ on $\mathcal{R}_{\theta}$.
Let $\varphi$ be the Borel probability measure uniformly distributed on $\mathcal{R}_\theta$ with moments $(\gamma_\beta)$.
Let $\psi$ be the Borel probability measure on $\mathcal{R}_\theta\times \mathcal{S}_{\alpha,\theta}$ defined by:
\[\psi(A\times B)\,\doteq\,\int_A\Q(B\,\vert\,\theta)\,\varphi(d\theta),\qquad B\in\mathcal{B}(\mathcal{S}_{\alpha,\theta}),
A\in\mathcal{B}(\mathcal{R}_\theta),\]
where $\Q(\cdot\vert\cdot)$ is a stochastic kernel on $\mathcal{R}_\theta\times\mathcal{S}_{\alpha,\theta}$ such that
$\Q(\cdot\,\vert\,\theta)$ is the probability measure uniformly distributed on
$\mathcal{S}_{\alpha,\theta}$ if $\theta\in O_1$,
and $\Q(\cdot\vert\theta)$ is any probability measure on $\mathcal{S}_{\alpha,\theta}$ if $\theta\in\mathcal{R}_\theta\setminus O_1$.
Then let $\z$ be the sequence of moments associated with $\psi$, i.e.,
\[z_{\beta,\nu}\,\doteq\,\int \theta^\beta\,\alpha^\nu\,d\psi(\theta,\alpha),\qquad \forall\,(\beta,\nu)\in\N^{\ell+T}_{2\tau}.\]
Then $\M_{\tau}(\z)\succ0$, $\M_{\tau-r_\mu}(d_\mu\,\z)\succ0$, $\mu=1,\ldots,M$, and
$\M_{\tau-1}(\phi_k\,\z)\succ0$, $k=1,\ldots,\ell$ as well.
Indeed suppose for instance that $\M_\tau(\z)p=0$ for some vector $p\neq0$.
Let $\tilde{p}\in\R[\theta,\alpha]_{\tau}$ be the polynomial with coefficient vector $p$.
\[0\,=\,\langle p,\M_\tau(\z)\,p\rangle\,=\,\int \tilde{p}(\theta,\alpha)^2\,d\psi(\theta,\alpha)=0,\]
but this implies that $\tilde{p}$ vanishes on the whole open set $O$, in contradiction with $p\neq0$.
Similarly, let $\mu\in\{1,\ldots,M\}$ be arbitrary, and suppose that $\M_{\tau-r_\mu}(\z)p=0$ for some vector $p\neq0$,
and let $\tilde{p}\in\R[\theta,\alpha]_{\tau-r_\mu}$ be the polynomial with coefficient vector $p$.
\[0\,=\,\langle p,\M_{\tau-r_\mu}(\z)\,p\rangle\,=\,\int \tilde{p}(\theta,\alpha)^2\,d_\mu(\theta,\alpha)\,d\psi(\theta,\alpha)=0,\]
but this implies that $\tilde{p}$ vanishes on the whole open set $O$, in contradiction with $p\neq0$.
A similar argument shows that $\M_{\tau-1}(\phi_k\,\z)\succ0$ for every $k=1,\ldots,\ell$.
Moreover, from the definitions of $\psi$ and $\varphi$,
\[L_\z(\theta^\beta)\,=\,\int_{\mathcal{R}_\theta\times\mathcal{S}_{\alpha,\theta}}\theta^\beta\,d\psi(\theta,\alpha)\,=\,\int_{\mathcal{R}_\theta}\theta^\beta\,\varphi(d\theta)\,=\,\gamma_\beta,\qquad\beta\in\N^\ell_{2\tau},\]
and so $\z$ is admissible for (\ref{SOS_dual}). Therefore Slater's condition holds for (\ref{SOS_dual}), and by a well-known result of convex optimization, the dual of (\ref{SOS_dual}) (i.e. (\ref{SOS_relax})) has a optimal solution if its value is finite.

But the value of the primal semidefinite program (\ref{SOS_relax}) is bounded below by
$\int_{\mathcal{R}_\theta}J(\theta,\alpha)d\varphi(\theta)$. Moreover, there exists $M>0$ such that $M-J(\theta,\alpha)>0$ on
$\mathcal{R}_\theta\times\mathcal{S}_{\alpha,\theta}$. Therefore by Putinar's Positivstellensatz, there exists some integer $\tau_0$
such that
\[M - J(\theta,\alpha) = \sigma_0(\theta,\alpha)+\sum_{\mu=1}^M \sigma_\mu(\theta,\alpha)d_\mu(\alpha,\theta)+\sum_{k=1}^\ell\psi_k(\theta,\alpha)\,\phi_k(\theta)\]
where $\deg(\sigma_0) \leq 2\tau_0$,
$\deg(\sigma_\mu d_\mu) \leq 2\tau_0$, $\mu=1, \ldots, M$, and $\deg(\psi_k \phi_k)\leq 2\tau_0$,
$k=1, \ldots, \ell$.  Hence the optimal value of (\ref{SOS_relax}) is finite whenever $\tau\geq\tau_0$
and so (\ref{SOS_relax}) has an optimal solution. $\Box$

\bibliography{minmax}

\end{document}